\newtheorem{proposition}{Proposition}
\newtheorem{definition}[proposition]{Definition}
\newtheorem{theorem}[proposition]{Theorem}
\newtheorem{corollary}[proposition]{Corollary}
\newtheorem{conjecture}[proposition]{Conjecture}
\newtheorem{lemma}[proposition]{Lemma}
\newtheorem{remark}[proposition]{Remark}
\newtheorem{example}{Example}
\begin{document}

\title{Título}
\author{Autores}
\title{Quasi-perfect Lee Codes of Radius 2 and Arbitrarily Large Dimension}
\author{Cristóbal Camarero and Carmen Martínez\\
Department of Computer Science and Electronics\\
Universidad de Cantabria, UNICAN, Spain.}
%
\date{Online version in IEEE Transactions on Information Theory: \href{http://dx.doi.org/10.1109/TIT.2016.2517069}{doi:~10.1109/TIT.2016.2517069}}

\maketitle
\begin{abstract} A construction of 2-quasi-perfect Lee codes is given over the space $\mathbb Z_p^n$ for $p$ prime, $p\equiv \pm 5\pmod{12}$ and $n=2[\frac{p}{4}]$. It is known that there are infinitely many such primes. Golomb and Welch conjectured that perfect codes for the Lee-metric do not exist for dimension $n\geq 3$ and radius $r\geq 2$. This conjecture was proved to be true for large radii as well as for low dimensions. The codes found are very close to be perfect, which exhibits the hardness of the conjecture. A series of computations show that related graphs are Ramanujan, which could provide further connections between Coding and Graph Theories.
\end{abstract}


\section{Introduction}\label{sec1}

Golomb and Welch conjectured in their seminal paper \cite{Golomb} that perfect Lee codes only exist for spheres of radius $r = 1$ or in Lee spaces of dimension $n = 1, 2$. A constructive result for 1-perfect Lee codes was also given in that paper. Moreover, for a radius sufficiently greater than the space dimension, a negative existence result was obtained by approximating the problem to the densest tessellation of $\mathbb R^n$ with cross-polytopes. Afterwards, Moln\'{a}r enumerated all lattice-like 1-perfect codes in \cite{Molnar}. Later, Post in \cite{Post} gave a strong negative result. For the cases in which a periodic perfect code exists, Post determined an upper bound for its radius. In this upper bound the radius must fulfill $r<\frac{1}{2}n\sqrt{2}-\frac{3}{4}\sqrt{2}-\frac{1}{2}$ for $n\geq 6$. Later, J. Astola~\cite{Astola} and Lepist\"o~\cite{Lepisto} improved the bound given by Post to a quadratic relation between $r$ and $n$, which can be considered as an Elias-type bound for Lee codes. These negative results suggest that the conjecture is more difficult for radius 2, as was argued by Horak in \cite{Horak_n6}.\par

Other authors have considered the conjecture for low dimensions. For example, Gravier \textsl{et al.} in \cite{Gravier} proved the non-existence of perfect codes in 3-dimensional Lee spaces, even considering spheres of different radii. Dimension 4 was studied by \v{S}pacapan in \cite{Spacapan}, again with the possibility of spheres of different radii and all of them being greater or equal to 2. Also, Horak in \cite{Horak_n5} and \cite{Horak_n6} proved the non-existence of perfect Lee codes for $r=2$ and spaces of dimension $n = 5, 6$.
Later, Horak and Gro\v{s}ek in \cite{Horak_radius2} computationally proved the non-existence of perfect Lee codes for dimension $n\leq 12$ and radius $r=2$ by restricting the problem to linear codes.\par

In addition, several papers have considered problems involving the conjecture that could provide some insight about it. One approach has been to generalize the Lee metric. Huber in \cite{Huber_gaussian} gave 1-perfect codes over Gaussian integers and some non-perfect codes with greater correction. In \cite{Costa_tessellation} Costa \textsl{et al.} considered a relation between tessellations, graphs and codes over flat tori. In \cite{IEEE_TIT, ISIT08, IEEETITQuat} Martinez \textsl{et al.} gave a generalization of the Lee distance by means of a family of Cayley graphs over Cayley--Dickson algebras. Also, the existence of perfect codes being ideals of the algebras was considered. Nishimura and Hiramatsu in \cite{Nishimura} generalized the Lee distance using a surjective function from $\mathbb Z^l$ into a finite field and constructed some non-perfect 2-error correcting codes for this metric.

The existence of Lee codes has also been considered in terms of the size $q$ of the alphabet. AlBdaiwi \textsl{et al.} in \cite{AlBdaiwi_enumeration} enumerated all the alphabet sizes $q$ such that there exists a linear 1-perfect Lee code over $\mathbb{Z}_q^{n}$. In \cite{Astola_bounds} H. Astola and Tabus obtained, for small alphabet size $q$ and dimension $n$, an upper bound of the number of codewords of
error correcting Lee codes.

Recently, a new approach has been taken in terms of \textsl{diameter perfect codes}, which were introduced by Ahlswede \textsl{et al.} in \cite{Ahlswede}. A subset $\mathcal{C} \subseteq \mathbb{Z}_{q}^{n}$  is a diameter perfect code if there exists an anticode $\mathcal{A}$ such that $|\mathcal{C}||\mathcal{A}|= q^n$. This concept generalizes perfect codes since diameter perfect codes with minimum distance being odd are in fact the perfect codes. Etzion in \cite{Etzion_diameter} built diameter perfect codes of minimum distance 4. Later, Horak and AlBdaiwi~\cite{Horak_diameter} enumerated the arities $q$ such that there are 4-diameter perfect codes over $\mathbb Z_q^n$. Araujo \textsl{et al.} in \cite{Araujo} presented a generalization of diameter perfect Lee codes, together with a new conjecture that extends the conjecture by Golomb and Welch. Etzion \textsl{et al.} in \cite{Etzion_weighing} built Lee codes for large dimension by means of weighing matrices.

Another way of proving the existence of perfect Lee codes has been to relax the condition of being perfect. Thus, although not widely used, quasi-perfect codes for the Lee metric have been considered. AlBdaiwi and Bose in \cite{AlBdaiwi_quasiperfect} presented some quasi-perfect codes for dimension $n=2$. Also, in \cite{Horak_radius2} the authors presented some quasi-perfect codes for $n=3$ and a few radii. Later, Queiroz \textsl{et al.} in \cite{Queiroz} characterized quasi-perfect codes over Gaussian and Eisenstein--Jacobi integers being linear. As a consequence, linear quasi-perfect Lee codes were obtained for $n=2$.


In the present paper an explicit construction of linear quasi-perfect Lee codes of radius 2 for arbitrarily large dimensions. It will be shown that these codes are very close to be perfect, since they have half the density of potential perfect codes. By contrast, all other results to this date depend on the dimension of the space. As a consequence, combinatorial arguments may be insufficient to address the conjecture. Nevertheless, the relation with Cayley graphs studied in this paper indicates that the conjecture has also algebraic features. Moreover, in the authors opinion, the existence of these quasi-perfect codes, hints that a perfect code might exist for small radius, although this is contrary to the general believe.

These quasi-perfect 2-error correcting Lee codes will be defined by means of Cayley graphs over Abelian finite groups. The degree of the graph will be the double of the dimension of the Lee space. The order of the graph will be in inverse relation to the density of the quasi-perfect code. Thus, the main contribution of the paper is presented in the next result.

\begin{theorem}\label{theo:existencia} For any prime $p\geq 7$ such that $p\equiv\pm 5 \pmod{12}$ there exists a linear
2-quasi-perfect $p$-ary Lee code over $\mathbb{Z}^n_p$, where $n=2\left[\frac{p}{4}\right]$ and with $p^{n-2}$ codewords.
\end{theorem}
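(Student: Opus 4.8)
The plan is to realize the code as the kernel of a surjective group homomorphism and to translate the two defining properties of a $2$-quasi-perfect code into a packing and a covering statement that can be checked in the quotient. Let $B(v,\rho)$ denote the Lee ball of radius $\rho$. Since $\mathcal C$ is to be linear of index $p^2$ in $\mathbb Z_p^n$, I would fix an abelian group $G$ with $|G|=p^2$ and a homomorphism $\phi\colon\mathbb Z_p^n\to G$ with $\mathcal C=\ker\phi$, determined by the images $s_i:=\phi(e_i)$ of the standard generators. The Lee distance from a vector $v$ to $\mathcal C$ equals the least Lee weight of a preimage of $\phi(v)$, so $\mathcal C$ has minimum distance $\ge 5$ iff $\phi$ is injective on $B(0,2)$, and $\mathcal C$ has covering radius $\le 3$ iff $\phi\big(B(0,3)\big)=G$. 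Thus the theorem reduces to exhibiting $s_1,\dots,s_n\in G$ for which $\phi|_{B(0,2)}$ is injective (the \emph{packing} condition) and $\phi|_{B(0,3)}$ is onto (the \emph{covering} condition); note $|B(0,2)|=2n^2+2n+1\le p^2$ by the choice $n=2[p/4]$, so packing is numerically possible, and the deficit $p^2-|B(0,2)|$ must be absorbed by weight-$3$ vectors.

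For the group I would take $G=\mathbb F_{p^2}$, exploiting that $p\equiv\pm5\pmod{12}$ forces $3$ to be a quadratic non-residue, so $\mathbb F_{p^2}=\mathbb F_p(\sqrt3)$ and $G\cong\mathbb Z_p^2$ as additive groups. The generators $s_i$ would be chosen as representatives of the $\pm$-pairs of a symmetric multiplicative set $S=-S\subset\mathbb F_{p^2}^\times$ of size $2n$ lying on a single norm circle $\{x:N(x)=c\}$, with the few elements of small order (those equal to $\pm1$, and for $p\equiv 7\pmod{12}$ also $\pm i$) deleted so that $|S|=2n$ exactly. The advantage of a norm circle is that every collision in $B(0,2)$ can be tested by its norm: for $x,y$ on the circle one has $N(x\pm y)=2c\pm\operatorname{Tr}(x\bar y)$ with $\operatorname{Tr}(x\bar y)\in\mathbb F_p$, so an identity such as $s_i\pm s_j=\pm2s_k$ or $s_i\pm s_j=s_k\pm s_l$ collapses, after taking norms, to a one-variable equation over $\mathbb F_p$.

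The packing condition is where the hypothesis must be used. I would enumerate the finitely many shapes of a potential nonzero codeword of weight $\le 4$ — that is, the coincidences among $0,\pm s_i,\pm2s_i,\pm s_i\pm s_j$ — and show each one forces either the nontrivial vanishing of a norm form such as $a^2-3b^2$ in $\mathbb F_p$, or a forbidden trace/root-of-unity relation, none of which has a solution precisely when $3$ is a non-residue together with the prescribed behaviour of $-1$; this is exactly the content of $p\equiv\pm5\pmod{12}$, and the deleted small-order elements remove the remaining degenerate coincidences. This step is essentially a case analysis, but it is the delicate part, since it is here that the exact congruence class mod $12$ is pinned down and the two subcases $p\equiv5$ and $p\equiv7$ diverge (through the differing status of $-1$).

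Finally, for covering I would show $\phi(B(0,3))=G$ by proving that the packed set $A:=\phi(B(0,2))$, of size about $p^2/2$, together with its neighbours under $S$ exhausts $G$; equivalently $A+(S\cup\{0\})=G$. Because the number of weight-$3$ vectors vastly exceeds the $p^2-|A|$ uncovered points, there is ample room, and I would establish surjectivity either by a direct parametrization of the complement of $A$ or, more robustly for all admissible $p$, by a character-sum / expander-mixing estimate on $\mathrm{Cay}(G,S)$: any element not dominated by $A$ would yield an independent-set-type inequality contradicting the spectral gap. This is also the point of contact with the Ramanujan property noted in the introduction, since the same eigenvalues of $\mathrm{Cay}(G,S)$ control the covering. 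I expect the main obstacle to be the packing case analysis together with making the covering argument uniform in $p$ rather than merely asymptotic, with the finitely many small primes dispatched by direct computation.
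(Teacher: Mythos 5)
Your opening reduction (code $=\ker\phi$, packing $=$ injectivity of $\phi$ on $B(0,2)$, covering $=$ surjectivity of $\phi$ on $B(0,3)$) matches the paper's Theorem on the code--Cayley-graph correspondence, but your choice of group and generators fails on half of the admissible primes. Taking $G=\mathbb F_{p^2}=\mathbb F_p(\sqrt3)$ with all generators on one field-norm circle $\{x \mid N(x)=c\}$ cannot satisfy the packing condition when $p\equiv 5\pmod{12}$: the kernel of the norm $N(x)=x^{p+1}$ is cyclic of order $p+1$, and since $p\equiv 2\pmod 3$ we have $3\mid p+1$, so the primitive cube roots of unity $\omega,\omega^2$ lie on the unit circle. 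Hence for \emph{every} generator $s$ with $N(s)=c$, the elements $s\omega$ and $s\omega^2$ also have norm $c$, and $s\omega+s\omega^2=s(\omega+\omega^2)=-s$. One checks $s\omega,\ s\omega^2,\ -s$ are pairwise non-proportional over $\{\pm1\}$, so this is a collision between a weight-$2$ and a weight-$1$ vector, i.e.\ a codeword of Lee weight $3$, and the code is not $2$-error-correcting. There are $\Theta(p)$ such triples, so deleting two circle elements cannot repair this (and your deletion bookkeeping is also inverted: for $p\equiv 7\pmod{12}$ one needs $2n=p+1$, the \emph{full} circle, so nothing may be deleted there). This is exactly why the paper works in $\mathbb Z[i]/p\mathbb Z[i]$ rather than in a field: for $p\equiv 1\pmod 4$ that ring is the split ring $\mathbb F_p\times\mathbb F_p$, whose norm-one set is the split torus $\{(x,x^{-1})\}$ of order $p-1$, which contains no cube roots of unity precisely because $3\nmid p-1$. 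The role of the mod-$12$ hypothesis is not ``$3$ is a non-residue plus the behaviour of $-1$'' in your sense; it is the statement (the paper's step reducing to $3=(2y)^2$ having no solution) that no $\beta$ of norm $1$ satisfies $N(1+\beta)=1$, i.e.\ that the torus of order $2n$ carrying the generators contains no cube roots of unity. Your construction picks the torus of the wrong order when $p\equiv 5\pmod{12}$; for $p\equiv 7\pmod{12}$ (where $\mathbb Z[i]/p\mathbb Z[i]\cong\mathbb F_{p^2}$ and the Gaussian norm is the field norm) your construction, with no deletions, coincides with the paper's.

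The covering step also has a genuine gap: the expander-mixing route is quantitatively hopeless. To exclude a vertex $v$ at distance $\geq 4$ one must show $v$ has a neighbour in $B(0,2)$; applying the mixing lemma to the pair $(\{v\},\,\phi(B(0,2)))$ with $|\phi(B(0,2))|\approx p^2/2$, degree $2n\approx p$ and $p^2$ vertices yields a contradiction only if every nontrivial eigenvalue satisfies $|\lambda|\lesssim 1$, whereas any $2n$-regular graph has $\lambda\gtrsim\sqrt{2n}\approx\sqrt p$, and even the Ramanujan property (left as a conjecture in the paper) only gives $|\lambda|\leq 2\sqrt{2n-1}$. The paper avoids this by arguing at the level of norms rather than vertices: a vertex $\gamma$ at distance $\geq 4$ with $t=\mathcal N(\gamma)$ forces $N_p(1)\cap N_p(t)=\{0\}$, which is refuted for $p\equiv 3\pmod 4$ by a direct count of the distance distribution against $|G|=p^2$, and for $p\equiv 1\pmod 4$ (after a separate treatment of zero divisors) by the Hasse--Weil bound applied to the explicit absolutely irreducible cubic $y(x+1)^2-x(y+1)(y+t)$, giving at least $p-2-2\sqrt p\geq 4$ affine solutions for $p\geq 17$. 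Collapsing the problem from $p^2$ vertices to about $p$ norm classes is what makes a square-root--type bound strong enough; your sketch never performs that reduction, and its alternative (``direct parametrization of the complement'') is precisely the unaddressed hard part.
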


Note that the notation $[a]$ stands for the closest integer to the rational number $a$. As an example of the codes obtained in previous result, let us consider the following:

\begin{example}
Let $n = 4$, $p= 7$. Then, the code over $\mathbb{Z}^{4}_{7}$ defined by the parity-check matrix

$$\left(\begin{array}{cccc}
1 & 0 & 2 & -2 \\
0 & 1 & 2 & 2 \\
\end{array}\right)$$ results in a 2-quasi-perfect 7-ary Lee code over $\mathbb{Z}^4_7$. This code has $p^{n-2} = 49$ codewords. It is known that perfect codes do not exist in this case since the sphere packing bound is $\frac{7^4}{41} \approx 58.56$.
\end{example}

As a consequence of Dirichlet's theorem on arithmetic progressions, there are infinitely many primes $p$ such that $p\equiv 5\pmod {12}$ and infinitely many primes such that $p\equiv -5\pmod {12}$. Thus, for any constant $c$, there is a prime $p\equiv \pm 5\pmod {12}$ such that the dimension $n=2\left[\frac{p}{4}\right]$ is greater than $c$. As a consequence of this and Theorem~\ref{theo:existencia}, it is obtained that:

\begin{corollary} There are infinitely many $n \in \mathbb{N}$ such that there exists a 2-quasi-perfect Lee code over a $n$-dimensional Lee space.
\end{corollary}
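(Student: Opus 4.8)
The plan is to derive the statement directly from Theorem~\ref{theo:existencia} together with Dirichlet's theorem on arithmetic progressions, so that the whole argument reduces to a counting observation about the dimension function $n(p) = 2\left[\frac{p}{4}\right]$.

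First I would observe that the residues $5$ and $7 \equiv -5 \pmod{12}$ are both coprime to $12$, so Dirichlet's theorem furnishes infinitely many primes in each of these two classes. Taking their union and deleting the single prime $p = 5$ (which violates $p \geq 7$), I obtain an infinite set $P$ of primes, every one of which satisfies the hypotheses $p \geq 7$ and $p \equiv \pm 5 \pmod{12}$ of Theorem~\ref{theo:existencia}. Applying that theorem to each $p \in P$ then yields, for every such $p$, a linear 2-quasi-perfect $p$-ary Lee code over $\mathbb{Z}_p^{n(p)}$ with $n(p) = 2\left[\frac{p}{4}\right]$, thereby exhibiting a 2-quasi-perfect Lee code in dimension $n(p)$.

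The only step requiring genuine attention — and the closest thing to an obstacle in an otherwise immediate deduction — is to confirm that the dimensions $n(p)$ realize infinitely many distinct values rather than clustering in a bounded range. For this I would use that the nearest integer to $p/4$ is at least $\frac{p}{4} - \frac{1}{2}$, whence $n(p) = 2\left[\frac{p}{4}\right] \geq \frac{p}{2} - 1$, which tends to infinity as $p$ ranges over the unbounded set $P$. Consequently $\{\,n(p) : p \in P\,\}$ is an unbounded, hence infinite, subset of $\mathbb{N}$, and for each of its members $n$ there exists a 2-quasi-perfect Lee code over an $n$-dimensional Lee space, which is exactly the claim.
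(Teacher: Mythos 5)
Your proof is correct and follows essentially the same route as the paper: Dirichlet's theorem on the progressions $5$ and $7 \equiv -5 \pmod{12}$, followed by Theorem~\ref{theo:existencia}, followed by the observation that $n(p) = 2\left[\frac{p}{4}\right]$ is unbounded as $p$ grows. Your explicit bound $n(p) \geq \frac{p}{2} - 1$ and the exclusion of $p=5$ just make precise what the paper states more briefly ("for any constant $c$ there is a prime $p \equiv \pm 5 \pmod{12}$ with $n > c$"), so there is nothing to add.
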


As it will be shown later, the result is constructive, and any application that requires the use of Lee-codes could benefit from it. For example, Roth and Siegel in \cite{Roth} considered BCH Lee codes and their application to constrained and partial-response channels. Using space embeddings, Jiang \textsl{et al.} in \cite{Jiang_CCRM} gave a method to construct Charge-Constrained Rank-Modulation codes (CCRM codes) from Lee error-correcting codes, which could be employed for flash memories. H. Astola and Stankovic in \cite{Astola_diagrams} considered Lee codes to build decision diagrams.

The rest of the paper is organized as follows. Since the codes considered in this paper will be defined by means of Cayley graphs, in Section~\ref{sec:grafo-codigo} the relation between Lee codes and Cayley graphs over Gaussian integers is stated. Moreover, the family of Cayley graphs under study is defined. Then, in Section~\ref{sec:correcion} the Cayley graphs selected are proved to have error correction capacity 2. In Section~\ref{sec:diametro} those Cayley graphs are shown to attain diameter 3, which implies that they define 2-quasi-perfect codes. Finally, in Section~\ref{sec:conclusiones} the results presented in this paper are discussed, and some open problems and future lines of research are detailed.

\section{Codes and Graphs}\label{sec:grafo-codigo}

Linear 2-quasi-perfect $p$-ary linear Lee codes are going to be defined by means of Cayley graphs. Therefore, the correspondence between a linear code and a Cayley graph is explained in this section. First, some fundamental definitions are stated here.

Since Lee codes are the target of our study, the natural space to be considered is the finite integer lattice $\mathbb Z_p^n$. However, for convenience, also the infinite lattice $\mathbb Z^n$ will be considered.
Therefore, a \textsl{code} $\mathcal{C}$ will be a subset of either $\mathbb Z_p^n$ or $\mathbb Z^n$. This code is said to be \textsl{linear} if it is a subgroup of the corresponding space.

The \textsl{Manhattan distance} will be used in the space $\mathbb Z^n$. For any two words $v, w \in \mathbb Z^n$ its Manhattan distance is defined as:
	$$d(v,w)=\sum_{j=1}^n |v_i-w_i|.$$
On the other hand, the \textsl{Lee distance} will be the metric used when considering $\mathbb Z_p^n$. For $v, w \in \mathbb Z_p^n$ its Lee distance is defined as
	$$d(v,w)=\sum_{j=1}^n \min\{ |s| \mid s\equiv v_j-w_j\pmod p,\ s\in\mathbb Z\}.$$
Since the Lee distance becomes the Manhattan distance for $p=\infty,$ there will be no opportunity for confusion.
In both cases the weight of a word $v$ is defined as its distance to the origin, which will be denoted as $|v|=d(v,O)$.
For any positive integer $r$, the \textsl{Lee sphere} of radius $r$ is defined as all the points whose weight is less or equal to $r$, that is:
	$$B_r^n=\{v \mid |v|\leq r\}.$$
Note that, for any dimension $n \geq 1$, the cardinal $|B_2^n|=2n^2+2n+1,$ \cite{Golomb}.

A code $\mathcal C$ is said to be \textsl{$t$-error correcting} if $t$ is the greatest integer such that for any word $w$ there is at the most one codeword $c\in\mathcal C$ with $d(w,c)\leq t$. Thus, $t$ is called the \textsl{error correction} of $\mathcal C$. A code $\mathcal C$ is said \textsl{$r$-covering} if $r$ is the smallest integer such that for any word $w$ there is at least one codeword $c\in\mathcal C$ with $d(w,c)\leq r$. Thus, $r$ denotes the covering radius of $\mathcal C$. Then, a code that is both $t$-error correcting and $t$-covering, or equivalently with error correction equal to its covering radius, is said to be \textsl{perfect}. Golomb and Welch in \cite{Golomb} conjectured that there only exist perfect Lee codes for $t=1$ or $n=2$. Therefore, the existence of quasi-perfect codes must be studied since they are the best alternative to the perfect codes. Thus, a code that is $t$-error correcting and $(t+1)$-covering is said to be \textsl{$t$-quasi-perfect}. In this work $2$-quasi-perfect Lee codes are found for arbitrarily large dimensions. This is done by the construction of a family of Cayley graphs that leads to the codes definition. The remainder of this section is devoted to define this relation between codes and graphs. For simplicity, the infinite lattice $\mathbb{Z}^{n}$ will be considered and an equivalent relation can be stated in the case of $\mathbb{Z}_p^{n}$.\par

Given a group $\Gamma$ and a set of generators $H = \{\beta_1, \ldots , \beta_s\} \subset \Gamma$, the \textsl{Cayley graph} over $\Gamma$ generated by $H$ is defined as the graph with set of vertices the elements of $\Gamma$, and adjacencies $(u, u + \beta_i),$ for every $u \in \Gamma$ and $i = 1, \ldots , s$. $H$ must satisfy $H=-H$ and $0\not\in H$ in order to be a simple undirected graph. Since only Abelian groups will be considered, the operation of the group will be denoted by $+$ and the neutral element by 0. Now, given a linear code $\mathcal C\subset\mathbb Z^n$  the associated graph is
	$$G=Cay(\mathbb Z^n/\mathcal C ; \{\pm e_1,\dotsc,\pm e_n\}).$$

Reciprocally, given a Cayley graph $Cay(\Gamma ; \{\pm a_1,\allowbreak\dotsc,\allowbreak\pm a_n\})$ a linear code can be built. First, let us consider the homomorphism $\phi: \mathbb{Z}^n \longrightarrow\Gamma$ such that $\phi(e_j)= a_j$. Then, the code is given by
	$$\mathcal C=\{x\in\mathbb Z^n \mid \phi(x)=0\}=\ker \phi.$$

Next, distance and correction parameters of both the code and the graph are related as Theorem \ref{theo:grafo-codigo} states. Now, let us recall some basic definitions. The \textsl{distance} $d_G(v,w)$ between two vertices $v$, $w$ in a graph $G$ is defined as the number of edges in the shortest path from $v$ to $w$. Then, the \textsl{diameter} of a graph $G$ is the maximum among distances between every pair of vertices. Since Cayley graphs are vertex-transitive, this can also be calculated as the maximum distance to one particular vertex, usually $0 \in \Gamma$.

\begin{definition} Given a Cayley graph $Cay(\Gamma ;\allowbreak\{\pm a_1,\allowbreak\dotsc, \allowbreak\pm a_n\})$ over an Abelian group $\Gamma$, its \textsl{error correction capacity} is defined as the greatest integer $t$ such that for every vertex $v\in \Gamma$ there are $|B_t^n|$ vertices at distance $t$ or less from $v$.
\end{definition}

Note that since $G$ is a Cayley graph, it is vertex-transitive and therefore it is enough to count the number of vertices around one vertex to determine its error correction capacity. Thus, the equivalence between distance and covering properties of a linear Lee code and its associated Cayley graph over an Abelian group is proved in the following theorem:

\begin{theorem}\label{theo:grafo-codigo} Let $\Gamma$ be a finite Abelian group that is generated by $\{a_1,\dotsc,a_n\}$ and let $G=Cay(\Gamma ; \{\pm a_1,\dotsc, \pm a_n\})$.
Let $\phi$ be the homomorphism from $\mathbb Z^n$ into $\Gamma$ defined by $\phi(e_j)= a_j$ and let $\mathcal C=\ker \phi$. Then,
\begin{enumerate}

\item the diameter of $G$ equals the covering radius of $\mathcal C$ and
\item the error correction capacity of $G$ equals the error correction of $\mathcal C$.

\end{enumerate}
\end{theorem}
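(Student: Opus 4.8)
The plan is to reduce both statements to a single structural fact: that the graph distance from the identity $0$ to a vertex $v\in\Gamma$ equals the least Lee weight among the preimages of $v$ under $\phi$. First I would note that, since $\{a_1,\dots,a_n\}$ generates $\Gamma$, the homomorphism $\phi$ is surjective and induces an isomorphism $\mathbb Z^n/\mathcal C\cong\Gamma$; in particular the cosets of $\mathcal C$ are exactly the fibres $\phi^{-1}(v)$, $v\in\Gamma$. The key lemma I would then establish is
$$d_G(0,v)=\min\{\,|y| : y\in\mathbb Z^n,\ \phi(y)=v\,\}.$$
For ``$\le$'', given $y$ with $\phi(y)=v$ I would spell out an explicit walk of length $|y|$: for each coordinate $j$ take $|y_j|$ consecutive steps along the generator $\mathrm{sign}(y_j)\,a_j$; concatenating over all $j$ yields a walk of length $\sum_j|y_j|=|y|$ ending at $\sum_j y_j a_j=v$. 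For ``$\ge$'', I would take a geodesic of length $k=d_G(0,v)$, read off its sequence of steps $\varepsilon_1 a_{i_1},\dots,\varepsilon_k a_{i_k}$ with $\varepsilon_\ell=\pm1$, and set $y_j=\sum_{\ell:\,i_\ell=j}\varepsilon_\ell$. Then $\phi(y)=v$ while $|y|=\sum_j|y_j|\le k$, since cancellation among oppositely-signed occurrences of a generator can only decrease coordinate sums. This identifies $d_G(0,v)$ with the minimal weight in the fibre $\phi^{-1}(v)$.

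Part 1 then follows at once. By translation invariance of the Manhattan distance, the covering radius of $\mathcal C$ is $\max_{w}\min_{c\in\mathcal C}|w-c|=\max_w\min\{\,|z| : z\in w+\mathcal C\,\}$, and because $w+\mathcal C=\phi^{-1}(\phi(w))$ this equals $\max_{v\in\Gamma}d_G(0,v)$. As $G$ is a Cayley graph, hence vertex-transitive, that maximum is precisely the diameter, so the covering radius equals $\operatorname{diam}(G)$.

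For Part 2 I would first observe that the ball of graph-radius $t$ about $0$ is exactly $\phi(B_t^n)$: by the key lemma $d_G(0,v)\le t$ holds iff some preimage of $v$ lies in $B_t^n$. Hence the number of vertices within distance $t$ is $|\phi(B_t^n)|\le|B_t^n|$, with equality precisely when $\phi$ is injective on $B_t^n$; so the error-correction capacity of $G$ is the largest $t$ for which $\phi|_{B_t^n}$ is injective. I would then match this to the code's error correction by a ball-disjointness argument: if $\phi$ is injective on $B_t^n$ and a word $w$ satisfies $d(w,c),d(w,c')\le t$ for codewords $c,c'$, then $w-c,w-c'\in B_t^n$ share the image $\phi(w)$, forcing $c=c'$; conversely, if $y,y'\in B_t^n$ have the same image then $y-y'\in\mathcal C$ and the word $w=y$ lies within distance $t$ of both $0$ and $y-y'$, so $t$-error-correction forces $y=y'$. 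Since both injectivity on $B_t^n$ and $t$-error-correction are downward-closed in $t$, their largest valid values coincide.

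The main obstacle is the ``$\ge$'' direction of the key lemma: one must argue carefully that collapsing a geodesic's signed multiset of generators into a single integer vector never raises the Lee weight, i.e.\ that combinatorial cancellation of oppositely-signed repetitions of a generator can only help. Once graph distance is pinned down as minimal coset weight, everything else is routine bookkeeping.
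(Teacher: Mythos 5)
Your proposal is correct and follows essentially the same route as the paper: your key lemma $d_G(0,v)=\min\{|y| : \phi(y)=v\}$ is precisely the paper's Lemma~\ref{lemma:1} (stated there as $d_G(\phi(x),0)=d(x,\mathcal C)$) rephrased in terms of fibres, and both your Part~1 (surjectivity plus vertex-transitivity) and your Part~2 (identifying the graph ball $\{v \mid d_G(v,0)\le t\}$ with $\phi(B_t^n)$ and matching injectivity of $\phi$ on $B_t^n$ to $t$-error correction via the same disjointness argument) mirror the paper's proof. The only notable difference is that you prove the Cayley-graph distance formula $d_G(v,0)=\min\bigl\{\sum_{j}|y_j| \mid \sum_{j} y_j a_j = v\bigr\}$ from scratch by constructing explicit walks and collapsing geodesics, whereas the paper simply quotes it as a known fact, so your write-up is marginally more self-contained but not a different argument.
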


But first of all, let us prove the following technical result.

\begin{lemma}\label{lemma:1}
In the hypothesis of Theorem \ref{theo:grafo-codigo}, for every $x\in \mathbb Z^n$ it is obtained that $d_G(\phi(x),0)=d(x,\mathcal C)$.
\end{lemma}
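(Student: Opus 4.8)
The plan is to reduce the graph distance $d_G(\phi(x),0)$ to a minimization over preimages under $\phi$, and then recognize that minimization as the Manhattan distance from $x$ to the kernel $\mathcal C$. The central identity I aim to establish is
$$d_G(\phi(x),0)=\min\{\,|y| : y\in\mathbb Z^n,\ \phi(y)=\phi(x)\,\},$$
after which the conclusion follows by a short algebraic substitution. Note that this minimum is attained, since it ranges over a nonempty set of nonnegative integers.

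First I would translate walks in the Cayley graph into lattice paths. A walk of length $\ell$ from $0$ to $\phi(x)$ in $G$ is a sequence of steps, each of the form $\pm a_j$; lifting each step $\pm a_j$ to $\pm e_j$ produces a word $y\in\mathbb Z^n$ with $\phi(y)=\phi(x)$ (using that $\phi$ is a homomorphism with $\phi(e_j)=a_j$) and with $|y|\le\ell$, since the net number of $e_j$-steps is at most the total number of $\pm e_j$-steps used. This shows $\min\{|y|:\phi(y)=\phi(x)\}\le d_G(\phi(x),0)$.

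For the reverse inequality I would argue constructively: given any $y$ with $\phi(y)=\phi(x)$, I build an explicit walk by taking $|y_j|$ steps equal to $\operatorname{sgn}(y_j)\,a_j$ for each coordinate $j$. This walk ends at $\phi(y)=\phi(x)$ and has length exactly $\sum_j|y_j|=|y|$, so $d_G(\phi(x),0)\le|y|$ for every such $y$, hence $d_G(\phi(x),0)\le\min\{|y|:\phi(y)=\phi(x)\}$. Combining the two inequalities yields the displayed identity.

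Finally, I would rewrite the minimization. Since $\phi(y)=\phi(x)$ is equivalent to $y-x\in\ker\phi=\mathcal C$, every admissible $y$ has the form $y=x+c$ with $c\in\mathcal C$, and conversely. Using $\mathcal C=-\mathcal C$,
$$\min_{\phi(y)=\phi(x)}|y|=\min_{c\in\mathcal C}|x+c|=\min_{c\in\mathcal C}|x-c|=\min_{c\in\mathcal C}d(x,c)=d(x,\mathcal C),$$
which is exactly the claim. The only step requiring genuine care is the weight formula above: one must verify that coordinatewise cancellation of steps can only decrease the resulting lattice weight, so that the graph distance coincides with the minimal Manhattan weight among preimages rather than merely bounding it from one side.
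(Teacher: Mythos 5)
Your proof is correct and follows essentially the same route as the paper's: both arguments rest on the identity $d_G(v,0)=\min\{\sum_{j}|y_j| \mid \sum_{j} y_j a_j=v\}$ for Cayley graphs over Abelian groups, and then identify that minimum with $d(x,\mathcal C)$ using $\ker\phi=\mathcal C$ and linearity. The only difference is that the paper cites this word-norm formula as a known fact and splits the conclusion into two inequalities, whereas you prove the formula explicitly by lifting walks to lattice words and then finish with a chain of equalities; the mathematical content is the same.
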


\begin{proof} Let $x$ be an arbitrary element of $\mathbb Z^n$.
Let us prove first that $d_G(\phi(x),0)\leq d(x,\mathcal C)$.
Let $c$ be the closest codeword to $x$, so $d(x,\mathcal C)=d(x,c)$.
By the definition of Manhattan distance, $d(x,c)=\sum_{j=1}^n|x_j-c_j|$.
As $\mathcal C$ is the kernel of $\phi$, $\phi(x)=\phi(x-c)=\sum_{j=1}^n(x_j-c_j)a_j$.
The distance in a Cayley graph over an Abelian group of a vertex $v$ to 0 is given by $d_G(v,0)=\min\{\sum_{j=1}^n |y_j|\mid \sum_{j=1}^n y_ja_j=v\}$.
Hence, taking $v=\phi(x)$ and $y=x-c$ in the previous expression, $d_G(\phi(x),0)\leq \sum_{j=1}^n |x_j-c_j|=d(x,\mathcal C)$.

For the second inequality, $d_G(\phi(x),0)\geq d(x,\mathcal C)$, let $y\in\mathbb Z^n$ be the vector such that $d_G(\phi(x),0)=\sum_{j=1}^n |y_j|$ and $\phi(x)=\sum_{j=1}^n y_ja_j$.
By definition of $y$, $\phi(x)=\phi(y)$, so the difference $c=x-y\in\mathcal C$ is a codeword.
Thus, as $\mathcal C$ is a linear code, $d(x,\mathcal C)\leq d(x,c)=d(x-c, O)=d(y,O)$, and by the definition of Manhattan distance $d(x,\mathcal C)\leq d(y,O)=\sum_{j=1}^n |y_j|=d_G(\phi(x),0)$.
\end{proof}

\begin{proof}(of Theorem \ref{theo:grafo-codigo}). For the first item in the theorem note that $\mathrm{diam}(G)
	=\max\{d_G(v,0)\mid v\in\Gamma\} =	\max\{d_G(\phi(x),0)\mid x\in\mathbb Z^n\},$ since $\phi$ is surjective. Then, by Lemma \ref{lemma:1} it is obtained that $\max\{d_G(\phi(x),0)\mid x\in\mathbb Z^n\} = \max\{d(x,\mathcal C)\mid x\in\mathbb Z^n \}	=\mathrm{covering\,radius}(\mathcal C)$.

For the second item let us proof that for every integer $s>0$, $ \phi(B_{s}^{n}) = \{ v \in \Gamma \mid d_{G}(v, 0) \leq s\}.$ If $x$ belongs to $B_s^n$ then $d(x, O) \leq s$ and, by Lemma \ref{lemma:1}, $\phi(x)$ is at distance at most $s$ from 0. Reciprocally, if there is a vertex $v \in \Gamma$ such that $d_G(v,0)\leq s$ then, there exists $x$ such that $\phi(x)=v$ and $x\in B_s^n$.

Now, let $t$ be the error correction of $\mathcal C$, that is, the greatest integer such that all the words in $B_t^n$ are closer to $O$ than to any other codeword. If $x,y\in B_t^n$ are such that $\phi(x)=\phi(y)$ then $\phi(x-y)=0$. Hence, $x-y\in \ker\phi=\mathcal C$. Since $d(x,y)\leq 2t$ it is obtained that $x=y$. Therefore, the cardinal numbers $|B_t^n|=|\phi(B_t^n)|$ are equal. Thus, by previous step $|\{ v \in \Gamma \mid d_{G}(v, 0) \leq t\}| = |B_t^n|,$ which implies that the error correction capacity of $G$ is at least $t$.

Finally, let us denote by $t'$ the error correction capacity of $G$. Again, there are $|B_{t'}^{n}|$ words at a distance less of equal to $t'$ from 0. Let $x\in\mathbb Z^n$ and let $c_1,c_2 \in \mathcal{C}$ be such that $d(x, c_1),d(x,c_2) \leq t'$. Then, note that $c_1-x, c_2-x \in B_{t'}^{n}$ and $\phi(-x) = \phi(c_1) + \phi(- x) = \phi(c_1-x)=\phi(c_2-x)$. From  $|B_{t'}^n|=|\phi(B_{t'}^n)|$, it is obtained that $\phi$ restricted to this set is a bijection, which implies that $c_1-x = c_2-x$, that is, $c_1=c_2$, which concludes the proof.
\end{proof}

\begin{remark}
Note that $Cay(\Gamma;\{\pm a_1,\dotsc, \pm a_n\})\cong Cay(\mathbb Z^n/\ker \phi ; \{\pm e_1,\dotsc,\pm e_n\})$. Thus, applying the previous procedure to obtain a code from a graph and applying it again to obtain a graph from a code, then an isomorphic graph is obtained.
\end{remark}

\begin{remark} Theorem \ref{theo:grafo-codigo} can be graphically interpreted by means of tessellations, as illustrated in Figure~\ref{fig:code-tessellation-graph}. Subfigure a) shows $\mathcal C = \langle (4, 4), (-4, 4) \rangle $, a 3-quasi-perfect linear Lee code over $\mathbb{Z}_{16}^{2}$. This is, the code has error correction 3 and covering radius 4. Subfigure b) shows a Voronoi tessellation induced by $\mathcal C$, in which every tile has as center a codeword. Subfigure c) shows in detail one of these tiles. As it can be observed, it contains the Lee sphere $B_3^2$ and it is contained in the Lee sphere $B_4^2$. Subfigure d) shows the Cayley graph $Cay(\frac{\mathbb{Z}_{16}^2 }{\mathcal{C}};\{\pm e_1,\pm e_2\})$. This graph is induced by tessellation as follows. The vertices are the words in the tile and two vertices are adjacent if they are at a distance 1, modulo the tessellation. Finally, observe that the graph has diameter 4 since there are 7 vertices at distance 4 from the center. Also, it has error correction capacity 3 since there are $25 = |B_{3}^2|$ vertices at a distance less or equal to 3.

\begin{figure*}
	\begin{center}
	\typeout{Drawing Figure}
	\begin{tikzpicture}[x=.5cm,y=.5cm]
		\foreach \x in {0,...,15}
		\foreach \y in {0,...,15}
		{
		\fill (\x,\y) circle (1pt);
		}
		\foreach \k in {0,...,3}
		\foreach \j in {0,...,1}
		{
			\pgfmathsetmacro\x{mod(16+4*\k-4*\j,16)}
			\pgfmathsetmacro\y{mod(4*\k+4*\j,16)}
			\fill (\x,\y) circle (2pt);
		}
		\node at (7.5,-1) {a)};
		\path (-0.4,-0.4) rectangle (15.4,15.4);
	\end{tikzpicture}
	\hskip 3ex
	\begin{tikzpicture}[x=.5cm,y=.5cm]
		\foreach \x in {0,...,15}
		\foreach \y in {0,...,15}
		{
		\fill (\x,\y) circle (1pt);
		}
		\begin{scope}
		\clip (-0.4,-0.4) rectangle (15.4,15.4);
		\foreach \k in {0,...,4}
		\foreach \j in {-2,...,3}
		{
			\pgfmathsetmacro\x{4*\k-4*\j}
			\pgfmathsetmacro\y{4*\k+4*\j}
			\fill (\x,\y) circle (2pt);
			\draw[overlay,thick] (\x,\y)
				+(0,3.5) -| +(1.5,2.5) -| +(2.5,1.5) -| +(3.5,0.5) -|
				+(4.5,0) |- +(3.5,-0.5) |- +(2.5,-1.5) |- +(1.5,-2.5) |-
				+(0,-3.5) -| +(-0.5,-2.5) -| +(-1.5,-1.5) -| +(-2.5,-.5) -|
				+(-3.5,0) |- +(-2.5,.5) |- +(-1.5,1.5) |- +(-0.5,2.5) |- +(0,3.5);
		}
		\end{scope}
		\node at (7.5,-1) {b)};
	\end{tikzpicture}
	\vskip .5em
	\begin{tikzpicture}[x=.8cm,y=.8cm]
		\foreach \x in {-4,...,4}
		\foreach \y in {-4,...,4}
		{
		\fill (\x,\y) circle (1pt);
		}
		\fill (0,0) circle (2pt);
		\draw[dashed]
			(0,3.45) -| (0.45,2.45) -| (1.45,1.45) -| (2.45,0.45) -|
			(3.45,0) |- (2.45,-0.45) |- (1.45,-1.45) |- (.45,-2.45) |-
			(0,-3.45) -| (-0.45,-2.45) -| (-1.45,-1.45) -| (-2.45,-.45) -|
			(-3.45,0) |- (-2.45,.45) |- (-1.45,1.45) |- (-.45,2.45) |- (0,3.45);
		\draw[thick,fill=black,fill opacity=0.5]
			(0,3.5) -| (1.5,2.5) -| (2.5,1.5) -| (3.5,0.5) -|
			(4.5,0) |- (3.5,-0.5) |- (2.5,-1.5) |- (1.5,-2.5) |-
			(0,-3.5) -| (-0.5,-2.5) -| (-1.5,-1.5) -| (-2.5,-.5) -|
			(-3.5,0) |- (-2.5,.5) |- (-1.5,1.5) |- (-0.5,2.5) |- (0,3.5);
		\draw[dashed,blue]
			(0,4.55) -| (0.55,3.55) -| (1.55,2.55) -| (2.55,1.55) -| (3.55,.55) -|
			(4.55,0) |- (3.55,-0.55) |- (2.55,-1.55) |- (1.55,-2.55) |- (.55,-3.55) |-
			(0,-4.55) -| (-0.55,-3.55) -| (-1.55,-2.55) -| (-2.55,-1.55) -| (-3.55,-.55) -|
			(-4.55,0) |- (-3.55,.55) |- (-2.55,1.55) |- (-1.55,2.55) |- (-.55,3.55) |- (0,4.55);
		\node at (0,-5) {c)};
	\end{tikzpicture}
    \hskip 2ex
	\begin{tikzpicture}
		\node {\includegraphics[height=5.6cm]{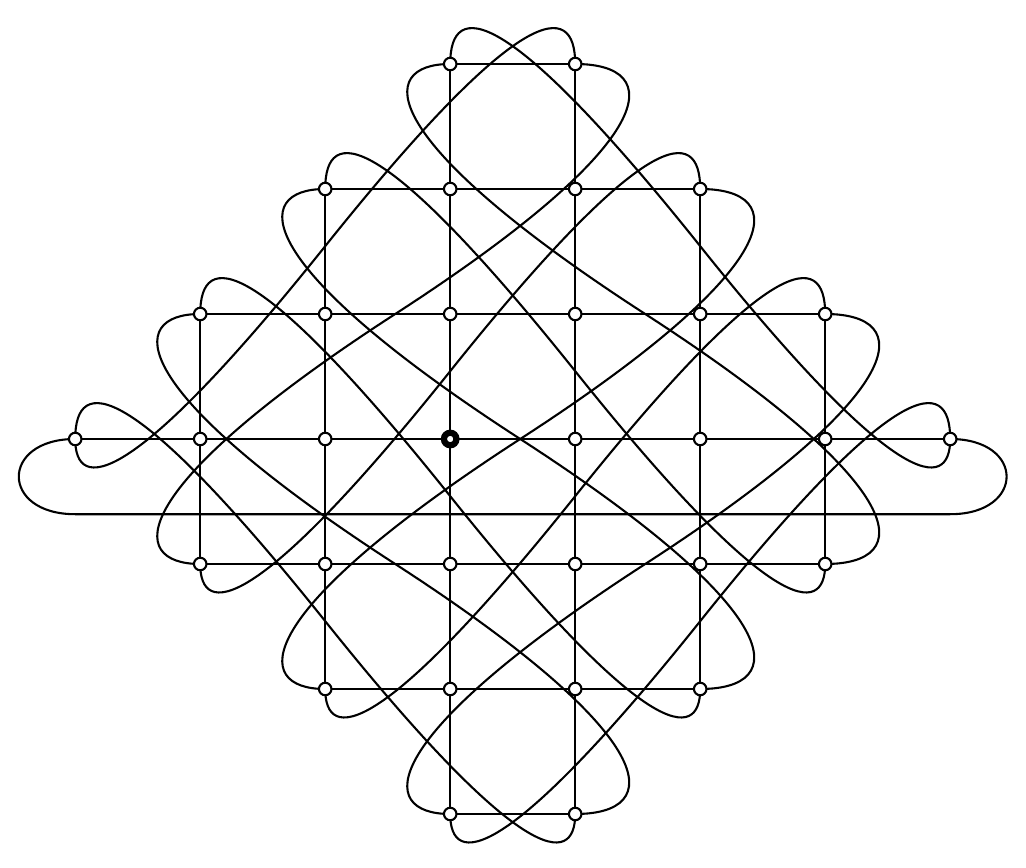}};
		\node at (0,-4) {d)};
	\end{tikzpicture}
	\typeout{Drawn Figure}
	\end{center}
	\caption{a) A 3-correcting and 4-covering linear Lee code over $\mathbb Z_{16}^2$. b) A Voronoi tessellation of the code. c) The tile associated to a codeword. d) The associated graph $Cay(\frac{\mathbb{Z}_{16}^2 }{\mathcal{C}};\{\pm e_1,\pm e_2\})$ in minimum distance representation.}
	\label{fig:code-tessellation-graph}
\end{figure*}
\end{remark}

The remainder of the paper describes a family of Cayley graphs over Gaussian integers. Let us denote by $\mathbb{Z}[i] = \{ a+bi \ | \ a, b \in \mathbb{Z}\}$ the ring of the Gaussian integers. In \cite{Hardy} the fundamentals on this ring can be found. Given an integer prime $p$, let us denote by $\mathbb{Z}[i]/p\mathbb{Z}[i]$ the quotient additive group of the Gaussian integers over the group generated by $(p) \subset \mathbb{Z}[i]$. Thus, the graph is defined as follows.

\begin{definition} Given an integer prime $p$, let us define the Cayley graph $\mathcal{G}_p = Cay(\mathbb{Z}[i]/p\mathbb{Z}[i], H),$  where
		$$H = \{\beta \in \mathbb{Z}[i]/p\mathbb{Z}[i] \mid \mathcal N(\beta)=1\}.$$
\end{definition}

Note that in the previous definition $\mathcal{N}(\beta) = \mathcal{N}(b_1 + b_2 i) = b_1 ^2 + b_2 ^2$ denotes the norm of $\beta$. Moreover, the adjacency in the graph is determined by the elements with unitary norm. In the subsequent sections, it will be proved that $\mathcal{G}_p$ induces a 2-quasi-perfect Lee code over $\mathbb{Z}_p^n$ under some conditions. Therefore, it must be determined which primes $p$ are such that $\mathcal{G}_p$ has error correction capacity 2 and diameter 3, as proved in Theorem~\ref{theo:grafo-codigo}.

\section{Error Correction Capacity of $\mathcal{G}_p$}\label{sec:correcion}

As explained in previous section, 2-quasi-perfect Lee codes will be obtained by means of Cayley graphs. In particular, it will be determined under which conditions the Cayley graph $\mathcal{G}_p$ over the additive group $\mathbb Z[i]/p\mathbb Z[i]$ and generating set the elements with unitary norm, induces a 2-quasi-perfect code. In this section it will be proved that $p \equiv \pm 5\pmod {12}$ implies that $\mathcal{G}_p$ has error correction capacity 2 over $\mathbb{Z}_{p}^n$ for $n = 2[\frac{p}{4}]$. Hence, in the remainder of the paper, let us assume that $p > 2$ is a prime integer. Therefore, the natural number $n=2[\frac{p}{4}]$ fulfills $p = 2n \pm 1$.

First, let us introduce some notation. Given a Gaussian integer $\beta = b_1 + b_2 i \in \mathbb{Z}[i]$, $\beta^{*}$ will denote its conjugate, that is $\beta^{*} = b_1 - b_2 i$. Also, $\Re(\beta) = b_1$ will stand for its real part and $\Im(\beta) = b_2$ for its imaginary part. Then, the following formula about the norm of a sum of Gaussian integers will be useful in several points of this paper.

\begin{lemma}\label{lem:normsum} For any pair of Gaussian integers $\beta,\gamma\in\mathbb Z[i]$,
	$$\mathcal N(\beta+\gamma)=\mathcal N(\beta)+\mathcal N(\gamma)+2\Re(\beta\gamma^*).$$
\end{lemma}

Then, the previous result can be used to prove the following technical lemma:

\begin{lemma}\label{lem:samenorms} For any $\gamma_1,\gamma_2\in\mathbb Z[i]/p\mathbb Z[i]$, if $\mathcal N(\gamma_1)=\mathcal N(\gamma_2)$ and $\mathcal N(1+\gamma_1)=\mathcal N(1+\gamma_2)$ then $\gamma_1\in\{\gamma_2,\gamma_2^*\}$.
\end{lemma}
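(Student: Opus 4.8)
The plan is to use Lemma~\ref{lem:normsum} to convert the two norm hypotheses into equations on the real parts of $\gamma_1$ and $\gamma_2$. Applying the lemma with $\beta=1$ gives $\mathcal N(1+\gamma_k)=1+\mathcal N(\gamma_k)+2\Re(\gamma_k^*)=1+\mathcal N(\gamma_k)+2\Re(\gamma_k)$, since $\Re(\gamma_k^*)=\Re(\gamma_k)$. Combining this with the assumption $\mathcal N(\gamma_1)=\mathcal N(\gamma_2)$, the second hypothesis $\mathcal N(1+\gamma_1)=\mathcal N(1+\gamma_2)$ collapses to $\Re(\gamma_1)=\Re(\gamma_2)$ in $\mathbb Z/p\mathbb Z$. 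So from the two stated conditions I immediately obtain two facts: the real parts of $\gamma_1$ and $\gamma_2$ agree, and their norms agree.

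Next I would write $\gamma_k=a_k+b_ki$ and translate these two facts into coordinate equations modulo $p$. Equality of real parts gives $a_1\equiv a_2$, so I may write $a:=a_1=a_2$. Equality of norms gives $a_1^2+b_1^2\equiv a_2^2+b_2^2$, which using $a_1=a_2$ reduces to $b_1^2\equiv b_2^2\pmod p$. The key step is then to factor this as $(b_1-b_2)(b_1+b_2)\equiv 0\pmod p$. Since $p$ is prime, $\mathbb Z/p\mathbb Z$ is an integral domain, so one of the two factors must vanish: either $b_1\equiv b_2$ or $b_1\equiv -b_2\pmod p$.

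Finally I would read off the conclusion. In the first case $b_1=b_2$ together with $a_1=a_2$ gives $\gamma_1=\gamma_2$. In the second case $a_1=a_2$ and $b_1=-b_2$ give exactly $\gamma_1=a+b_1 i = a-b_2 i = (a+b_2 i)^* = \gamma_2^*$. Either way $\gamma_1\in\{\gamma_2,\gamma_2^*\}$, as claimed. I expect the only delicate point to be the reliance on primality of $p$ to rule out zero divisors when factoring $b_1^2-b_2^2$; over a general $\mathbb Z/m\mathbb Z$ the conclusion can fail, but since the paper assumes $p$ prime throughout this section, the integral-domain argument goes through cleanly and the rest is purely routine bookkeeping.
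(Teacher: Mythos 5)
Your proof is correct and follows essentially the same route as the paper's: both apply Lemma~\ref{lem:normsum} to deduce $\Re(\gamma_1)=\Re(\gamma_2)$, then reduce the norm equality to $b_1^2=b_2^2$ in $\mathbb Z/p\mathbb Z$ and conclude $b_1=\pm b_2$. The only difference is that you spell out the integral-domain step (factoring $(b_1-b_2)(b_1+b_2)\equiv 0$ and invoking primality of $p$) that the paper leaves implicit, which is a worthwhile clarification since the ambient ring $\mathbb Z[i]/p\mathbb Z[i]$ itself can have zero divisors.
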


\begin{proof} Since $\mathcal N(1+\gamma_1)=\mathcal N(1+\gamma_2)$, by Lemma~\ref{lem:normsum} it is obtained that $\Re(\gamma_1)=\Re(\gamma_2)$.
Therefore, there are $x,y,z\in\mathbb Z/p\mathbb Z$ such that $\gamma_1=x+yi$ and $\gamma_2=x+zi$.
Now, $\mathcal N(\gamma_1)=\mathcal N(\gamma_2)$ implies that $x^2+y^2=x^2+z^2$. As a consequence, $y^2=z^2$ and therefore $y\in\{\pm z\},$ which means $\gamma_1\in\{\gamma_2,\gamma_2^*\}$.
\end{proof}

\begin{corollary}\label{cor:zerodivisors}
Let $\beta \in \mathbb{Z}[i]/p\mathbb Z[i]$ be such that $\mathcal N(\beta)=1$. Then, $1+\beta$ is not a proper zero divisor.
\end{corollary}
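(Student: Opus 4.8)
The plan is to reduce the statement to a short computation with the norm, exploiting the identity $\gamma\gamma^*=\mathcal N(\gamma)$ valid in $\mathbb Z[i]/p\mathbb Z[i]$. The key observation is that an element $\gamma$ of this ring is a unit precisely when $\mathcal N(\gamma)\not\equiv 0\pmod p$: if $\mathcal N(\gamma)\neq 0$ in $\mathbb Z/p\mathbb Z$ then $\gamma^*\,\mathcal N(\gamma)^{-1}$ is an inverse of $\gamma$, while conversely $\gamma\delta=1$ forces $\mathcal N(\gamma)\mathcal N(\delta)=\mathcal N(\gamma\delta)=\mathcal N(1)=1$, so $\mathcal N(\gamma)\neq 0$. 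Consequently a nonzero element is a proper zero divisor if and only if its norm vanishes modulo $p$. Thus it suffices to prove that $\mathcal N(1+\beta)\equiv 0\pmod p$ can only occur when $1+\beta=0$.

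First I would compute $\mathcal N(1+\beta)$ by applying Lemma~\ref{lem:normsum} to the sum $\beta+1$. Since $\mathcal N(1)=1$, $\mathcal N(\beta)=1$ by hypothesis, and $1^*=1$ gives $\Re(\beta\cdot 1^*)=\Re(\beta)$, this yields $\mathcal N(1+\beta)=2+2\Re(\beta)=2\bigl(1+\Re(\beta)\bigr)$. As $p$ is an odd prime, $2$ is invertible modulo $p$, so $\mathcal N(1+\beta)\equiv 0\pmod p$ forces $\Re(\beta)\equiv -1\pmod p$.

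It then remains to combine this with $\mathcal N(\beta)=1$. Writing $\beta=\Re(\beta)+\Im(\beta)i$ with $\Re(\beta)=-1$, the relation $\mathcal N(\beta)=\Re(\beta)^2+\Im(\beta)^2=1$ collapses to $\Im(\beta)^2=0$ in $\mathbb Z/p\mathbb Z$. Since $\mathbb Z/p\mathbb Z$ is a field it has no nonzero nilpotents, so $\Im(\beta)=0$ and hence $\beta=-1$, i.e.\ $1+\beta=0$. This shows that whenever $1+\beta$ has vanishing norm it is in fact zero; otherwise its norm is nonzero and it is a unit, and in neither case is it a proper zero divisor.

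I expect the only delicate point to be the characterization of units versus proper zero divisors through the norm; once that is in place the rest is a direct calculation. A pleasant feature of this approach is that it is uniform in $p$ and avoids distinguishing the inert case $p\equiv 3\pmod 4$ (where the ring is already a field and the claim is immediate) from the split case $p\equiv 1\pmod 4$, since the norm criterion settles both simultaneously.
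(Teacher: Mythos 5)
Your proof is correct and takes essentially the same route as the paper's: both reduce ``proper zero divisor'' to ``vanishing norm'' and then use Lemma~\ref{lem:normsum} to deduce $\Re(\beta)=-1$ and $\Im(\beta)^2=0$, hence $\beta=-1$ and $1+\beta=0$. The only differences are cosmetic: the paper invokes Lemma~\ref{lem:samenorms} (comparing $\beta$ against $-1$, since $\mathcal N(1+\beta)=0=\mathcal N(1+(-1))$) where you inline that same computation, and you additionally spell out the unit criterion via $\gamma\gamma^*=\mathcal N(\gamma)$, a step the paper leaves implicit.
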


\begin{proof}If $1+\beta$ is a zero divisor then $\mathcal N(1+\beta)=0=\mathcal N(1+(-1))$.
By Lemma~\ref{lem:samenorms}, $\beta\in\{-1,-1^*\}=\{-1\}$ and $1+\beta=0$.
\end{proof}

Let us denote by $G = \mathcal{U}(\mathbb{Z}[i]/ p \mathbb{Z}[i])$ the multiplicative group formed by the units of the ring. Then,
the set $$ H = \{\beta \in G  \mid \mathcal N(\beta)=1\}$$ is clearly a multiplicative normal subgroup of $G$.
It is actually a cyclic group, although this fact will not be used in the proofs.
Note that $H$ is the set of adjacencies of $\mathcal G_p$, that is, $\mathcal G_p=Cay(\mathbb{Z}[i]/ p \mathbb{Z}[i],H)$.
For any $\gamma\in \mathbb{Z}[i]/ p \mathbb{Z}[i]$, the following notation is introduced:
	$$\gamma H =\{\gamma\beta \mid \beta\in H \}.$$
Notice that if $\gamma\in G$, then $\gamma H$ is the \textsl{coset} of $H$ in $G$ with respect to $\gamma$.
Nevertheless, this notation is also defined for elements outside $G$, \textsl{i.e.}, for zero divisors of $\mathbb{Z}[i]/ p \mathbb{Z}[i]$.

The following lemma tells us that cosets can be identified by the norms of its elements.

\begin{lemma}\label{lem:cosetnorm}For any $\gamma \in G$, $$\gamma H=\{\beta\in\mathbb Z[i]/p\mathbb Z[i]\mid \mathcal N(\beta)=\mathcal N(\gamma)\}.$$
\end{lemma}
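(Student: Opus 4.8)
The plan is to prove the set equality $\gamma H=\{\beta\mid \mathcal N(\beta)=\mathcal N(\gamma)\}$ by establishing the two inclusions separately, using the fact that $\mathbb Z[i]/p\mathbb Z[i]$ is a finite ring together with the multiplicativity of the norm. First I would prove the inclusion $\gamma H\subseteq\{\beta\mid\mathcal N(\beta)=\mathcal N(\gamma)\}$, which is the routine direction. The norm on $\mathbb Z[i]$ is multiplicative, $\mathcal N(\alpha\delta)=\mathcal N(\alpha)\mathcal N(\delta)$, and this descends to the quotient $\mathbb Z[i]/p\mathbb Z[i]$. Hence for any $\beta\in H$, which by definition satisfies $\mathcal N(\beta)=1$, we get $\mathcal N(\gamma\beta)=\mathcal N(\gamma)\mathcal N(\beta)=\mathcal N(\gamma)$. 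Thus every element of $\gamma H$ has norm $\mathcal N(\gamma)$, giving the first inclusion.

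The reverse inclusion $\{\beta\mid\mathcal N(\beta)=\mathcal N(\gamma)\}\subseteq\gamma H$ is where the real work lies, and this is the step I expect to be the main obstacle. Here the hypothesis $\gamma\in G$, that is, $\gamma$ is a unit, becomes essential. Suppose $\beta$ satisfies $\mathcal N(\beta)=\mathcal N(\gamma)$. Since $\gamma$ is a unit, it has a multiplicative inverse $\gamma^{-1}$, and I would set $\delta=\gamma^{-1}\beta$. The goal is then to show $\delta\in H$, \emph{i.e.}, $\mathcal N(\delta)=1$. By multiplicativity, $\mathcal N(\gamma)\mathcal N(\delta)=\mathcal N(\gamma\delta)=\mathcal N(\beta)=\mathcal N(\gamma)$. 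Because $\gamma$ is a unit, $\mathcal N(\gamma)$ is a unit in $\mathbb Z/p\mathbb Z$ (a nonzero element of the field), so it can be cancelled, yielding $\mathcal N(\delta)=1$. Therefore $\delta\in H$ and $\beta=\gamma\delta\in\gamma H$, completing the inclusion.

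The crux of the argument is precisely the cancellation step, which relies on $\mathcal N(\gamma)$ being invertible modulo $p$; this is guaranteed exactly because $\gamma\in G=\mathcal U(\mathbb Z[i]/p\mathbb Z[i])$ and because $p$ is prime, so $\mathbb Z/p\mathbb Z$ is a field in which every nonzero element is invertible. I would note that $\gamma$ being a unit forces $\mathcal N(\gamma)\neq 0$: if $\mathcal N(\gamma)=\gamma\gamma^*$ were zero, then $\gamma$ would be a zero divisor and not a unit. Assembling the two inclusions then establishes the claimed equality, and I would remark that the restriction $\gamma\in G$ is genuinely needed, since for a zero divisor $\gamma$ the norm $\mathcal N(\gamma)$ vanishes and the cancellation argument collapses.
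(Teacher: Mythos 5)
Your proposal is correct and follows essentially the same route as the paper's proof: the easy inclusion via multiplicativity of the norm, and the reverse inclusion by writing $\beta=\gamma(\gamma^{-1}\beta)$ and showing the factor $\gamma^{-1}\beta$ lies in $H$. Your explicit cancellation of $\mathcal N(\gamma)$ in $\mathbb Z/p\mathbb Z$ merely spells out the step the paper states as $\mathcal N(\beta\gamma^{-1})=1$, so there is no substantive difference.
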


\begin{proof} In order to prove the sets equality, it will be first proved that $\gamma H \subseteq \{\beta\in G \mid \mathcal N(\beta)=\mathcal N(\gamma)\}.$ Thus, let us consider $\beta\in \gamma H$ and it has to be proved that $\mathcal N(\beta)=\mathcal N(\gamma)$. Since $\beta\in \gamma H,$ then there exists $\eta\in H$ such that $\beta=\gamma\eta$. Hence $\mathcal N(\beta)=\mathcal N(\gamma)\mathcal N(\eta)=\mathcal N(\gamma)$.

Now, let us consider the other inclusion, that is, $\gamma H \supseteq \{\beta\in G \mid \mathcal N(\beta)=\mathcal N(\gamma)\}$. Therefore, let $\beta \in G$ be such that $\mathcal N(\beta)=\mathcal N(\gamma)$.
Since $\gamma$ is invertible, $\beta=\gamma(\beta\gamma^{-1})$.
Now, as $\mathcal N(\beta\gamma^{-1})=1$ it is obtained that $\beta\in\gamma H$.
\end{proof}

Theorem~\ref{thm:cardinal} states that the degree of the graph $\mathcal{G}_p$ is $2n$. To prove it some particular cases of the Quadratic Reciprocity Law will be necessary, which are recalled in the following theorem for self-containedness.

\begin{theorem}[Quadratic Reciprocity]\label{theo:qrl}
If $p$ is an integer prime, then:
\begin{enumerate}
\item The number of solutions to $-1=x^2$ in $\mathbb Z/p\mathbb Z$
	is:
\begin{itemize}
\item 2 if $p\equiv 1\pmod 4$,
\item 1 if $p=2$ and
\item 0 if $p\equiv 3\pmod 4$.
\end{itemize}
\item The number of solutions to $3=x^2$ in $\mathbb Z/p\mathbb Z$
is:
\begin{itemize}
\item 2 if $p\equiv \pm 1\pmod {12}$,
\item 1 if $p=3$ or $p=2$ and
\item 0 otherwise.
\end{itemize}
\end{enumerate}

\end{theorem}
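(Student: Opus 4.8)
The plan is to reduce both counting statements to the evaluation of Legendre symbols and to treat the small primes by hand. First I would record the elementary fact that, over the field $\mathbb Z/p\mathbb Z$ with $p$ odd, the equation $x^2=a$ has exactly $1+\left(\frac{a}{p}\right)$ solutions whenever $p\nmid a$: the squaring map on the nonzero residues is two-to-one onto the nonzero quadratic residues (its kernel is $\{\pm1\}$, of size $2$ since $p$ is odd), so a nonzero residue has two square roots and a nonresidue has none. Thus counting solutions amounts to deciding whether $-1$, respectively $3$, is a quadratic residue, except in the degenerate cases $p\mid a$, which I would dispose of directly. Concretely, for $p=2$ both $-1$ and $3$ reduce to $1=1^2$, giving the single solution $x=1$; for $p=3$ the value $3$ reduces to $0=0^2$, giving the single solution $x=0$. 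This accounts for all the ``$1$ solution'' entries.

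For the first item I would invoke the first supplementary law $\left(\frac{-1}{p}\right)=(-1)^{(p-1)/2}$, which is immediate from Euler's criterion $a^{(p-1)/2}\equiv\left(\frac{a}{p}\right)\pmod p$. Hence $-1$ is a quadratic residue exactly when $(p-1)/2$ is even, i.e.\ $p\equiv 1\pmod 4$, yielding two solutions, and is a nonresidue when $p\equiv 3\pmod 4$, yielding none. Combined with the $p=2$ computation above, this settles item~1.

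For the second item, with $p>3$ odd I would apply the full reciprocity law to the pair $(3,p)$: since $(3-1)/2=1$, one has $\left(\frac{3}{p}\right)\left(\frac{p}{3}\right)=(-1)^{(p-1)/2}$, so that $\left(\frac{3}{p}\right)=(-1)^{(p-1)/2}\left(\frac{p}{3}\right)$. The factor $(-1)^{(p-1)/2}$ depends only on $p\bmod 4$, while $\left(\frac{p}{3}\right)$ equals $+1$ or $-1$ according as $p\equiv 1$ or $p\equiv 2\pmod 3$ (the only nonzero residues mod $3$, with $1$ a square and $2$ a nonsquare). I would then run the Chinese Remainder case analysis modulo $12$: the four classes $p\equiv 1,5,7,11\pmod{12}$ give sign products $(+)(+)$, $(+)(-)$, $(-)(+)$, $(-)(-)$ respectively, so $\left(\frac{3}{p}\right)=+1$ precisely for $p\equiv\pm1\pmod{12}$ and $\left(\frac{3}{p}\right)=-1$ for $p\equiv\pm5\pmod{12}$. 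Translating back through the count $1+\left(\frac{a}{p}\right)$ gives two solutions when $p\equiv\pm1\pmod{12}$ and none for the remaining primes $p>3$; together with the hand computations for $p=2$ and $p=3$ this establishes item~2.

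The argument is essentially bookkeeping once Euler's criterion and the general reciprocity law are taken as known, so I do not anticipate a serious obstacle. The only points requiring care are to keep the degenerate primes $p=2$ and $p=3$ separate from the residue-class analysis and to pair the two signs correctly across the four classes modulo $12$, since a slip there would interchange the residue and nonresidue sets and hence the ``$2$ solutions'' and ``$0$ solutions'' cases.
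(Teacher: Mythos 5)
Your proposal is correct, but there is nothing in the paper to compare it against: the paper states this theorem without proof, explicitly recalling it ``for self-containedness'' as a pair of known special cases of the Quadratic Reciprocity Law. Your write-up supplies the standard derivation that the paper leaves implicit, and every step checks out. The reduction of the solution count to $1+\left(\frac{a}{p}\right)$ for $p\nmid a$ is right (the squaring map on $(\mathbb Z/p\mathbb Z)^\times$ has kernel $\{\pm 1\}$ of size $2$ for odd $p$); the degenerate cases are handled correctly ($p=2$ gives the single root $x=1$ for both $-1$ and $3$, and $p=3$ gives the single root $x=0$ of $x^2=3\equiv 0$, which is the one case where the Legendre-symbol count does not apply); the first supplement via Euler's criterion settles item~1; and your sign table modulo $12$ is accurate: the classes $1,5,7,11$ give products $(+)(+),(+)(-),(-)(+),(-)(-)$, so $3$ is a quadratic residue exactly for $p\equiv\pm1\pmod{12}$ and a nonresidue for $p\equiv\pm5\pmod{12}$, matching the statement. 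One could quibble that invoking the full reciprocity law to prove a theorem labelled ``Quadratic Reciprocity'' is a reduction rather than a from-scratch proof, but the statement being proved is a specific solution-counting consequence, and deducing it from the general law (plus Euler's criterion) is exactly the intended reading; an alternative self-contained route for item~2 would be Gauss's lemma applied to $a=3$, but that buys nothing here.
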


\begin{theorem}\label{thm:cardinal}
For any odd prime integer $p$, let $n=2[\frac{p}{4}]$.  Then,
$$|H|= |\{\beta\in\mathbb Z[i]/p\mathbb Z[i]\mid \mathcal N(\beta)=1\}| = 2n.$$
\end{theorem}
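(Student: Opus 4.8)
The plan is to read $\mathcal N$ as a multiplicative map and obtain $H$ as a kernel. Since $\beta\beta^{*}=\mathcal N(\beta)$, any $\beta$ with $\mathcal N(\beta)\neq 0$ is invertible, so the units are exactly $G=\{\beta\mid\mathcal N(\beta)\neq 0\}$ and $\mathcal N$ restricts to a group homomorphism $G\to(\mathbb Z/p\mathbb Z)^{*}$ whose kernel is $H$. By Lemma~\ref{lem:cosetnorm} the fibre of $\mathcal N$ over each attained nonzero value is a single coset $\gamma H$, hence has cardinality $|H|$; therefore $|G|=|H|\cdot|\mathcal N(G)|$. So it suffices to compute $|G|$, to show that $\mathcal N(G)=(\mathbb Z/p\mathbb Z)^{*}$, then to divide, and finally to check that the answer equals $2n$.

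For the surjectivity I would use a pigeonhole count on squares. In $\mathbb Z/p\mathbb Z$ the set of squares (including $0$) has $\frac{p+1}{2}$ elements, so for any $c$ the two sets $\{x^{2}\}$ and $\{c-y^{2}\}$, each of size $\frac{p+1}{2}$, must intersect, yielding $x^{2}+y^{2}=c$. For $c\neq 0$ this produces a unit $x+yi$ of norm $c$, so $\mathcal N(G)=(\mathbb Z/p\mathbb Z)^{*}$ has order $p-1$, and consequently $|H|=|G|/(p-1)$.

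To compute $|G|=p^{2}-|\{\beta\mid\mathcal N(\beta)=0\}|$ I count the solutions of $a^{2}+b^{2}=0$, and here the residue of $p$ modulo $4$ enters through Theorem~\ref{theo:qrl}. If $p\equiv 1\pmod 4$ there is $u$ with $u^{2}=-1$, so $a^{2}+b^{2}=(a+ub)(a-ub)$ vanishes exactly on the two lines $a=\pm ub$, which meet only at the origin, giving $2(p-1)+1=2p-1$ zero-norm elements and $|G|=(p-1)^{2}$, whence $|H|=p-1$. If $p\equiv 3\pmod 4$ then $-1$ is not a square, $a^{2}+b^{2}=0$ forces $a=b=0$, so $|G|=p^{2}-1$ and $|H|=p+1$. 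Finally I match with $n=2[\frac{p}{4}]$: writing $p=4k+1$ gives $n=2k$ and $2n=p-1$, while $p=4k+3$ gives $n=2k+2$ and $2n=p+1$; in both cases $|H|=2n$, consistent with the relation $p=2n\pm 1$ noted in the text.

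The only genuinely delicate point is the case split keyed to $p\bmod 4$: everything hinges on correctly counting the zero-norm elements via the $-1$ clause of the Quadratic Reciprocity Law and then checking that the two resulting values $p\mp 1$ line up with $2n$ in the matching residue class. The surjectivity of the norm is a short pigeonhole step, and the passage from $|G|$ to $|H|$ is immediate once Lemma~\ref{lem:cosetnorm} guarantees that all nonzero-norm fibres are cosets of equal size.
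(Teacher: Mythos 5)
Your proof is correct, and it takes a genuinely different route from the paper's. The paper counts the solutions of $x^2+y^2=1$ directly by stereographic projection from $(1,0)$: every other solution lies on a unique line $y=s(x-1)$ and is given by $x=\frac{s^2-1}{s^2+1}$, $y=\frac{-2s}{s^2+1}$, so the count is $1$ plus the number of slopes $s$ with $s^2+1\neq 0$, and the first item of Theorem~\ref{theo:qrl} decides whether $0$ or $2$ slopes are excluded. You instead compute $|H|$ as the kernel of the norm homomorphism on the unit group: $|H|=|G|/|\mathcal N(G)|$, with surjectivity of the norm obtained by pigeonhole on the $\frac{p+1}{2}$ squares, and $|G|$ obtained by counting the zero-norm elements, which is where Theorem~\ref{theo:qrl} enters for you (the zero-norm locus is two lines through the origin when $-1$ is a square, and $\{0\}$ otherwise). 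Both arguments pivot on the quadratic character of $-1$, just at different places, and your final matching of $p\mp 1$ with $2n$ in the correct residue classes is right. There is also no circularity in your appeal to Lemma~\ref{lem:cosetnorm}: it precedes the theorem in the paper and its proof uses only multiplicativity of the norm (alternatively, the first isomorphism theorem gives the same fibre count). Your route is more structural and generalizes readily, and your description of the zero-norm locus as the two lines $a=\pm ub$ is precisely the fact the paper re-derives later in Lemma~\ref{lem:zH}; the paper's route is more elementary and self-contained, needing neither the group structure nor a surjectivity argument, and it yields an explicit parametrization of the elements of $H$.
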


\begin{proof} It is clear that
$$|H|=|\{(x,y)\mid x,y\in\mathbb{Z}/p\mathbb{Z},\ x^2+y^2=1\}|.$$

Therefore, let us consider the solutions of $x, y\in \mathbb{Z}/p\mathbb{Z}$ of equation $x^2+y^2=1$. First, if $x=1$ then $y^2=0$ whose unique solution is $y=0$.
Let us assume $x\neq 1$ to find the rest of the solutions. Since $x\neq 1$, $x-1$ has inverse and it is possible to define $s=y/(x-1) \in \mathbb{Z}/p\mathbb{Z}$.
By considering the intersection of the straight line $y=s(x-1)$ with the curve $x^2+y^2=1$ it is obtained that $x^2+(s(x-1))^2=1$. The only solutions of this equation are $x=1$ (which has already been considered)
and $x=\frac{s^2-1}{s^2+1}$. This second solution for $x$ equals 1 if and only if $p=2$. Thus, the only solutions with $x\neq 1$ are $x=\frac{s^2-1}{s^2+1}$ and $y=\frac{-2s}{s^2+1}$.

Now, for each possible value of $s$, there is one solution with this form, that is, $p$ minus the number of solutions of $s^2+1=0$. By the Quadratic Reciprocity Law (first item of Theorem~\ref{theo:qrl}) there are $p+1$ solutions if $p\equiv 3 \pmod 4$ and $p-1$ if $p\equiv 1 \pmod 4$. Thus, for primes of the form $p=1+4k$, there are $p-1=4k=2n$ solutions and for primes $p=-1+4k$ there are $p+1=4k=2n$ solutions, where $k \in \mathbb{N}$.

Finally, just to ensure that the counted solutions are all different, note that if for a pair $s_1, s_2$ the same solution $(x,y)$ is obtained, then $s_1=s_2=y/(x-1)$.
\end{proof}

Next, it is easy to obtain the following consequence of previous theorem, which will be used in Section~\ref{sec:diametro} to determine the diameter of the graph $\mathcal{G}_p$.

\begin{corollary}\label{cor:cardinalcoset}
For any odd prime integer $p$, let $n=2[\frac{p}{4}]$. If $0 \neq \gamma \in \mathbb{Z}[i]/p\mathbb{Z}[i]$ then $|\gamma H|=2n.$
\end{corollary}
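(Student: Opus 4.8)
The plan is to show that for any nonzero $\gamma$ the multiplication map $\beta\mapsto\gamma\beta$ restricts to a bijection from $H$ onto $\gamma H$; since this gives $|\gamma H|=|H|=2n$ by Theorem~\ref{thm:cardinal}, the corollary follows at once. Surjectivity is immediate from the very definition $\gamma H=\{\gamma\beta\mid\beta\in H\}$, so the entire content lies in proving injectivity. The only real danger is that $\gamma$ may be a zero divisor (this happens precisely when $p\equiv 1\pmod 4$, where $\mathbb Z[i]/p\mathbb Z[i]\cong\mathbb F_p\times\mathbb F_p$), in which case one cannot naively cancel $\gamma$.

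To establish injectivity I would take $\beta_1,\beta_2\in H$ with $\gamma\beta_1=\gamma\beta_2$ and exploit that $H$ is a multiplicative group: thus $\beta_2$ is a unit and $\delta:=\beta_1\beta_2^{-1}$ again lies in $H$. Rewriting the equality as $\gamma\beta_2(\delta-1)=0$ and cancelling the unit $\beta_2$, the problem reduces to showing that, for $\gamma\neq 0$ and $\delta\in H$, the relation $\gamma(\delta-1)=0$ forces $\delta=1$.

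The key step is that $\delta-1$ is either $0$ or a unit, never a proper zero divisor. This is exactly Corollary~\ref{cor:zerodivisors} applied to $\beta:=-\delta$, which satisfies $\mathcal N(\beta)=\mathcal N(\delta)=1$: the corollary asserts that $1+\beta=1-\delta=-(\delta-1)$ is not a proper zero divisor, hence it vanishes precisely when $\delta=1$ and is a unit otherwise. Consequently, if $\delta\neq 1$ then $\delta-1$ is invertible, so $\gamma(\delta-1)=0$ yields $\gamma=0$, contradicting the hypothesis; therefore $\delta=1$, i.e.\ $\beta_1=\beta_2$, and the map is injective.

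The main obstacle—indeed the only subtle point—is the zero-divisor case, since over $\mathbb F_p\times\mathbb F_p$ multiplication by a zero divisor is a priori non-injective. The resolution is the observation, already packaged as Corollary~\ref{cor:zerodivisors}, that a difference of unit-norm elements stays away from the proper zero divisors unless it vanishes. Once injectivity is secured, combining it with the obvious surjectivity and with Theorem~\ref{thm:cardinal} immediately delivers $|\gamma H|=2n$.
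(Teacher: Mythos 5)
Your proof is correct and takes essentially the same approach as the paper: both arguments reduce the claim to injectivity of multiplication by $\gamma$ on $H$ and both invoke Corollary~\ref{cor:zerodivisors} on a quotient of two norm-one elements to rule out the zero-divisor obstruction, then conclude via Theorem~\ref{thm:cardinal}. The only difference is presentational---the paper argues by \textsl{reductio ad absurdum} from $|\gamma H|<2n$ and forms $1-\beta_2\beta_1^{-1}$, whereas you prove injectivity directly through $\delta=\beta_1\beta_2^{-1}$.
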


\begin{proof} Firstly, note that if $\gamma\in G$, then $\gamma H$ is a coset, which is widely known to have the same cardinal. Thus, the non-immediate part of the proof lies on the zero divisors.
By Theorem~\ref{thm:cardinal}, it is straightforward that $|\gamma H|\leq 2n$.
Proceeding by \textsl{reductio ad absurdum}, let us assume $|\gamma H| < 2n$. Then, there exist $\beta_1\neq \beta_2$ such that $\gamma\beta_1=\gamma\beta_2$, thus $\gamma(\beta_1-\beta_2)=0$.
Since $\gamma\neq 0$ then $\beta_1-\beta_2$ must be a zero divisor. Now, multiplying by $\beta_1^{-1}$, $1-\beta_2\beta_1^{-1}$ is also a zero divisor.
By Corollary~\ref{cor:zerodivisors}, $1-\beta_2\beta_1^{-1}=0$ and hence $\beta_1=\beta_2$, which is a contradiction.
\end{proof}

Before stating the conditions under which $\mathcal{G}_p$ has error correction capacity 2, the following lemma is going to be proved.
This lemma determines the number of possible norms among the neighbours of a vertex with a given norm.

\begin{lemma}\label{lem:degree} For any $c\in\mathbb Z/p\mathbb Z$, $c\neq 0$, let us consider the set
$N_p(c) = \{\mathcal N(1+\beta)\mid \mathcal N(\beta)=c\} \subset \mathbb{Z}/p\mathbb{Z}$. Then, it is obtained that:
	$$
	|N_p(c)|
	=\begin{cases}
	n+1	&	\text{ if $c$ is a square residue \textrm{mod $p$}},\\
	n	&	\text{ if $c$ is not a square residue \textrm{mod $p$}.}
	\end{cases}
	$$
\end{lemma}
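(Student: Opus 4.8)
The plan is to translate the question about the values $\mathcal N(1+\beta)$ into a question about the real parts of the Gaussian integers of norm $c$. First I would apply Lemma~\ref{lem:normsum} with the two summands $1$ and $\beta$, obtaining for every $\beta$ with $\mathcal N(\beta)=c$ that
$$\mathcal N(1+\beta)=\mathcal N(1)+\mathcal N(\beta)+2\Re(\beta^*)=1+c+2\Re(\beta),$$
where I used $\Re(\beta^*)=\Re(\beta)$. Since $p$ is odd, the affine map $x\mapsto 1+c+2x$ is a bijection of $\mathbb Z/p\mathbb Z$, so two such $\beta$ yield the same value $\mathcal N(1+\beta)$ exactly when they share the same real part. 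Hence $|N_p(c)|$ equals the number of distinct real parts $\Re(\beta)$ occurring among the $\beta$ with $\mathcal N(\beta)=c$; equivalently, the number of $x\in\mathbb Z/p\mathbb Z$ for which $c-x^2$ is a square (allowing $0$), since then $\beta=x+yi$ with $y^2=c-x^2$ realizes that real part.

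Next I would count these elements with multiplicities. Writing $\beta=x+yi$, the element is determined by choosing $x$ and then $y$ with $y^2=c-x^2$; the number of admissible $y$ is $2$ when $c-x^2$ is a nonzero square, $1$ when $c-x^2=0$, and $0$ when $c-x^2$ is not a square residue. Let $a$ be the number of $x$ with $x^2=c$ and let $b$ be the number of $x$ with $c-x^2$ a nonzero square residue. Then the real parts that actually occur are precisely these $a+b$ values, so $|N_p(c)|=a+b$. On the other hand, summing the number of admissible $y$ over all $x$ counts each $\beta$ of norm $c$ exactly once, and by Lemma~\ref{lem:cosetnorm} together with Corollary~\ref{cor:cardinalcoset} there are exactly $2n$ of them (every nonzero $c$ is a sum of two squares mod $p$, a standard finite-field fact, so the fiber is nonempty and is therefore a full coset of size $2n$). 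This produces the single linear relation $a+2b=2n$.

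Finally I would evaluate $a$. The equation $x^2=c$ has two solutions when $c$ is a square residue and none when it is not, so $a=2$ in the first case and $a=0$ in the second. Substituting into $a+2b=2n$ gives $b=n-1$ and hence $|N_p(c)|=a+b=n+1$ when $c$ is a square residue, and $b=n$ and $|N_p(c)|=n$ when $c$ is not, which is the assertion. The only genuinely delicate point is the bookkeeping in the middle step: one must verify that the $2n$ elements of norm $c$ are partitioned by real part with the stated multiplicities $1$ and $2$, and that the ``$c-x^2=0$'' contributions are handled correctly, since it is precisely the two extra such values (present exactly when $c$ is a square residue) that account for the discrepancy between the two cases. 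Everything else reduces either to the linear relation $a+2b=2n$ or to a direct appeal to results already established.
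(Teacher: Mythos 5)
Your proposal is correct and follows essentially the same route as the paper: both rest on the fact that there are exactly $2n$ elements of norm $c$ (via Lemma~\ref{lem:cosetnorm} and Corollary~\ref{cor:cardinalcoset}) and that $\mathcal N(1+\beta)$ determines and is determined by $\Re(\beta)$, so that $|N_p(c)|$ counts the conjugation classes within the fiber, which differ between the two cases only by the two real elements $\pm s$ present exactly when $c$ is a square residue. The only cosmetic difference is that you inline the content of Lemma~\ref{lem:samenorms} through the explicit affine formula $\mathcal N(1+\beta)=1+c+2\Re(\beta)$ and the bookkeeping relation $a+2b=2n$, where the paper instead lists the fiber as $\{s,-s,\beta_1,\dotsc,\beta_{n-1},\beta_1^*,\dotsc,\beta_{n-1}^*\}$ and cites that lemma directly.
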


\begin{proof} In the first case, that is $c$ being a square residue, there must exists $s\in\mathbb Z/p\mathbb Z$ such that $c=s^2$.
By Lemma~\ref{lem:cosetnorm} and Corollary~\ref{cor:cardinalcoset} there are $2n$ elements with norm $c$, which are:
\begin{multline*}
	\{\beta\mid \mathcal N(\beta)=c\}\\
	=\{s,-s,\beta_1,\beta_2,\dotsc,\beta_{n-1},\beta_1^*,\beta_2^*,\dotsc,\beta_{n-1}^*\},
\end{multline*}for some $\beta_1,\dotsc,\beta_{n-1}\in \mathbb Z[i]/p\mathbb Z[i]$.
Then,
\begin{multline*}
		N_p(c) = \{\mathcal N(1+\beta)\mid \mathcal N(\beta)=c\}=\\
	\{\mathcal N(1+s),\mathcal N(1-s),\mathcal N(1+\beta_1),\mathcal N(1+\beta_2),\dotsc,\mathcal N(1+\beta_{n-1})\},
\end{multline*}which are different by Lemma~\ref{lem:samenorms}.
Hence $|N_p(c)|=2+(n-1)=n+1$.

For the case of $c$ being a square non-residue let us proceed in a similar way. It is obtained that
\begin{multline*}
	\{\beta\mid \mathcal N(\beta)=c\}\\
	=\{\beta_0,\beta_1,\beta_2,\dotsc,\beta_{n-1},\beta_0^*,\beta_1^*,\beta_2^*,\dotsc,\beta_{n-1}^*\}.
\end{multline*}
Then
\begin{multline*}
		N_p(c)=\{\mathcal N(1+\beta)\mid \mathcal N(\beta)=c\}=\\
	\{\mathcal N(1+\beta_0),\mathcal N(1+\beta_1),\mathcal N(1+\beta_2),\dotsc,\mathcal N(1+\beta_{n-1})\},
\end{multline*}which are different by Lemma~\ref{lem:samenorms}.
Hence $|N_p(c)|=n$.
\end{proof}

As it is noted afterwards, the case $c=1$ in previous lemma will be used to prove the error correction capacity. Later,
the fact that $n$ is a lower bound of $|N_p(c)|$ will be considered to determine the graph diameter.

To finish the section, next theorem establishes the conditions for $p$ such that $\mathcal{G}_p$ has error correction capacity 2.

\begin{theorem}\label{thm:correction} Let $p$ be a prime integer satisfying $p\equiv \pm 5 \pmod{12}$.
Then, the Cayley graph $\mathcal{G}_p$ has error correction capacity 2.
\end{theorem}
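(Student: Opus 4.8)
The plan is to use vertex-transitivity of $\mathcal{G}_p$ to reduce the problem to counting, organised by norm, the vertices lying at distance at most $2$ from $0$, and to show this count is exactly $|B_2^n|=2n^2+2n+1$. This yields error correction capacity at least $2$, after which a crude cardinality comparison will rule out capacity $3$.

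First I would record that, since $\mathcal{G}_p$ is $2n$-regular by Theorem~\ref{thm:cardinal}, the distance-$0$ and distance-$1$ vertices account for $1+2n$ of them, matching $|B_1^n|$. The remaining task is to count the vertices at distance exactly $2$, which are the nonzero elements of $H+H$. Writing such a vertex as $\eta_1+\eta_2=\eta_1(1+\eta_1^{-1}\eta_2)$ with $\eta_1,\eta_2\in H$ and using $\mathcal N(\eta_1)=1$, its norm equals $\mathcal N(1+\beta)$ for some $\beta\in H$; hence every distance-$2$ vertex has norm in $N_p(1)$. By Lemma~\ref{lem:degree}, since $1$ is a square residue, $|N_p(1)|=n+1$.

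The crucial step, and where the hypothesis $p\equiv\pm 5\pmod{12}$ enters, is to identify the two spurious norms in $N_p(1)$. Taking $\beta=-1\in H$ shows $0\in N_p(1)$, while by Corollary~\ref{cor:zerodivisors} the only element of $H+H$ of norm $0$ is $0$ itself. For the norm $1$, Lemma~\ref{lem:normsum} gives $\mathcal N(1+\beta)=2+2\Re(\beta)$ for $\beta\in H$, so $\mathcal N(1+\beta)=1$ forces $\Re(\beta)=-\tfrac12$ and then $\Im(\beta)^2=\tfrac34$; this is solvable precisely when $3$ is a square residue, which by Theorem~\ref{theo:qrl} fails exactly for $p\equiv\pm 5\pmod{12}$. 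Thus $1\notin N_p(1)$: no element of $H+H$ has norm $1$, which simultaneously shows the graph is triangle-free and that $H$ and $H+H$ are disjoint.

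It then remains to convert the $n$ surviving norms of $N_p(1)\setminus\{0\}$ into a vertex count. For any such norm $c\neq 0$ there is $\beta\in H$ with $\mathcal N(1+\beta)=c$; since $c\neq 0$, the element $1+\beta$ is a unit (Corollary~\ref{cor:zerodivisors}), so by Lemma~\ref{lem:cosetnorm} every element of norm $c$ has the form $(1+\beta)\eta=\eta+\beta\eta\in H+H$ and therefore lies at distance exactly $2$, being neither $0$ nor of norm $1$. By Corollary~\ref{cor:cardinalcoset} each of these $n$ norm-classes contains exactly $2n$ elements, giving $2n^2$ distance-$2$ vertices and the total $1+2n+2n^2=|B_2^n|$, so the capacity is at least $2$. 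Finally, to see it is not larger, I would note that the number of vertices at distance at most $3$ cannot exceed $p^2=(2n\pm 1)^2$, whereas a direct estimate gives $|B_3^n|>p^2$ for every $n\ge 3$ (hence for all $n$ arising here); thus that number is strictly below $|B_3^n|$ and the capacity is exactly $2$. I expect the main obstacle to be the norm-$1$ exclusion, as it is the single place where the arithmetic condition on $p$ is genuinely used and where the count would otherwise break down.
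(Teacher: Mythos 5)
Your proof is correct and follows essentially the same route as the paper's: count $1+2n$ vertices at distance at most $1$, identify distance-$2$ norms with $N_p(1)$ via Lemma~\ref{lem:cosetnorm} and Corollary~\ref{cor:cardinalcoset}, and exclude the norm $1$ by reducing to the solvability of $3=(2y)^2$ and invoking Theorem~\ref{theo:qrl}. Your explicit exclusion of capacity $3$ via $p^2<|B_3^n|$ and the norm-$0$ discussion through Corollary~\ref{cor:zerodivisors} are slightly more detailed than the paper's treatment, but the argument is the same.
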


\begin{proof}
Let $n=2\left[\frac{p}{4}\right]$.
As it was explained in previous section, it has to be proved that $\mathcal{G}_p$ contains $|B_2^n|=2n^2+2n+1$ vertices at distance 2 or less from 0. Clearly, 0 is the unique vertex at distance 0. Now, the set $H$ contains all the vertices at distance 1 and $|H|=2n$ by Theorem~\ref{thm:cardinal}.

The vertices at distance 2 is the set $A=\{\beta_a+\beta_b\mid \beta_a,\beta_b\in H\}\setminus (H \cup \{0\})$. Thus, let us prove that $|A|=2n^2$.
By Lemma~\ref{lem:cosetnorm} and Corollary~\ref{cor:cardinalcoset},
$|A|=2n\cdot |N_p(1)\setminus \{0,1\}|$.
Since 1 is always a square residue for any $p$, hence by Lemma~\ref{lem:degree},
$|N_p(1)\setminus \{0\}| = n$.
It remains to be proved that 1 does not belong to $N_p(1)$.

Suppose that there is $\beta$ with $\mathcal N(\beta)=1$ and $\mathcal N(1+\beta)=1$. Then, by Lemma~\ref{lem:normsum}, $1=2+2\Re(\beta)$ and hence $\Re(\beta)=-2^{-1}$.
Let $\beta=-2^{-1}+yi$, which implies $1=\mathcal N(\beta)=2^{-2}+y^2$.
Then, $3=(2y)^2$, which only has solutions for $p=3$ or $p\equiv \pm 1\pmod {12}$ by
the second item of Theorem~\ref{theo:qrl}.
Thus, $|N_p(1)\setminus \{0,1\}|=|N_p(1)\setminus \{0\}|=n$ and $|A|=2n\cdot n$, which concludes the proof.
\end{proof}

\begin{remark} If $p$ is a prime greater than $3$ that does not satisfy $p\equiv \pm 5\pmod {12}$, then $p\equiv \pm 1\pmod {12}$.
In this case, $\mathcal G_p$ only contains $2n^2+1$ vertices at distance 2 or less from vertex 0. Although it is not a 2-error correcting code, it is very close to it, since only
$2n$ syndromes cannot be corrected.
\end{remark}

\section{Diameter of $\mathcal{G}_p$}\label{sec:diametro}

In this section it will be proved that $\mathcal{G}_p$ has diameter 3 for any prime $p > 5$. The proof will be divided into two subsections. The first considers the case $p\equiv 3 \pmod 4$ and the second the case $p\equiv 1 \pmod 4$. Also, from here onwards it will be assumed again that $n = 2 [\frac{p}{4}]$.
Note that, since $|\mathbb{Z}[i]/ p \mathbb{Z}[i]|=p^2>|B_2^n|$, there are vertices outside the sphere of radius 2, which means that the diameter of the graph is at least 3.
As it will be seen next, the proofs proceed by \textsl{reductio ad absurdum} by the assumption of the existence of a vertex at a distance 4 from vertex 0, thus reaching a contradiction.

\subsection{Case $p\equiv 3 \pmod 4$}

In this case the proof of the diameter can be easily obtained by using a counting argument.
Note that in this case $p=2n-1$ and therefore $\mathbb{Z}[i]/ p \mathbb{Z}[i]$ is a field.

\begin{theorem} For any prime $p$ such that $p\equiv 3\pmod 4$ the graph $\mathcal G_p$ has diameter 3.
\end{theorem}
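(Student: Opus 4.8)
The plan is to exploit the field structure throughout. Since $p\equiv 3\pmod 4$, the ring $\mathbb{Z}[i]/p\mathbb{Z}[i]$ is a field with $p^2=(2n-1)^2$ elements and no proper zero divisors. The first observation I would record is that $d_G(v,0)$ depends only on $\mathcal N(v)$: multiplication by any $\eta\in H$ is a graph automorphism fixing $0$ (as $H$ is a multiplicative group it permutes the connection set $H$), and by Lemma~\ref{lem:cosetnorm} the orbit of $v\neq 0$ under this action is exactly the whole norm shell $\{w\mid \mathcal N(w)=\mathcal N(v)\}$. Thus the $p$ norm values partition the vertices into shells on which $d_G(\cdot,0)$ is constant, the shell of $0$ being $\{0\}$ and every nonzero shell having $2n$ elements by Corollary~\ref{cor:cardinalcoset}. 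The problem then reduces to showing that every norm value is reached at distance at most $3$.

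Next I would locate the norms reachable at small distance. Distance $\leq 1$ reaches the norms $\{0,1\}$. A distance-$2$ vertex has the form $\beta_a+\beta_b=\beta_a(1+\beta_b\beta_a^{-1})$ with $\beta_a,\beta_b\in H$, so its norm lies in $N_p(1)$, and conversely every value of $N_p(1)$ occurs. By Lemma~\ref{lem:degree} (with $c=1$ a square residue) $|N_p(1)|=n+1$, and $0\in N_p(1)$ via $\beta=-1$. Hence the set $M$ of \emph{nonzero} norm values at distance $\leq 2$ contains $N_p(1)\setminus\{0\}$, whence $|M|\geq n$.

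The core is a counting contradiction, as the reductio in the section preamble suggests. Suppose some vertex lies at distance $\geq 4$; by the shell observation its entire shell, of some norm $c^*\neq 0,1$, is at distance $\geq 4$. Fix $\gamma$ with $\mathcal N(\gamma)=c^*$. Every neighbour $\gamma+\beta$, $\beta\in H$, is then at distance $\geq 3$, and has norm $\mathcal N(\gamma)\mathcal N(1+\beta\gamma^{-1})=c^*\mathcal N(1+\delta)$ with $\mathcal N(\delta)=1/c^*$; as $\beta$ runs over $H$ these fill out $c^*N_p(1/c^*)$. Since $c^*\neq 1$ and the ring is a field, $0\notin N_p(1/c^*)$ (otherwise $1+\delta=0$ forces $\mathcal N(\delta)=1$, i.e.\ $c^*=1$), so $c^*N_p(1/c^*)$ consists of $|N_p(1/c^*)|\geq n$ nonzero norm values, each realised by a distance-$\geq 3$ vertex and therefore, by the shell observation, disjoint from $M$.

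Both $M$ and $c^*N_p(1/c^*)$ now sit inside $\mathbb{Z}/p\mathbb{Z}\setminus\{0\}$, a set of size $p-1=2n-2$, and their disjointness would give $|M|+|c^*N_p(1/c^*)|\geq n+n=2n>2n-2$, a contradiction. Hence no vertex is at distance $\geq 4$, so $\mathrm{diam}(\mathcal G_p)\leq 3$; together with $\mathrm{diam}(\mathcal G_p)\geq 3$ (from $p^2>|B_2^n|$, noted in the section preamble) the diameter is exactly $3$. The step I expect to be most delicate, and would check carefully, is precisely this cardinality collision: it closes only because Lemma~\ref{lem:degree} forces \emph{both} competing sets to have size at least $n$, and because the field hypothesis removes the value $0$ from $c^*N_p(1/c^*)$, confining both sets to the $(2n-2)$-element group of nonzero norms.
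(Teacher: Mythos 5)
Your proof is correct and follows essentially the same route as the paper's: a \textsl{reductio ad absurdum} from a hypothetical vertex $\gamma$ at distance $4$, using Lemma~\ref{lem:cosetnorm}, Corollary~\ref{cor:cardinalcoset} and Lemma~\ref{lem:degree} to exhibit two disjoint sets of norms, each of size at least $n$ --- the norms reachable within distance $2$ and the norms of the neighbours of $\gamma$ (your $c^*N_p(1/c^*)$ is exactly the paper's $N_p(c)$) --- which cannot coexist. The only difference is bookkeeping: you count norm shells inside $\mathbb{Z}/p\mathbb{Z}\setminus\{0\}$, obtaining $2n > p-1$, whereas the paper multiplies each shell by its cardinal $2n$ and counts vertices against $p^2$, obtaining $4n^2+1 > p^2$; the two counts are equivalent.
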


\begin{proof}
By \textsl{reductio ad absurdum} let us assume that there exists a vertex $\gamma\in\mathbb{Z}[i]/ p \mathbb{Z}[i]$ at distance 4 from vertex 0. Let $c=\mathcal N(\gamma)$. Since $\gamma$ is so far, it is obtained that $N_p(1)\cap N_p(c)=\emptyset$.

Let us denote by $W_t(0)$ the number of vertices at a distance $t$ from vertex 0. Then, $\{W_t(0) \mid t = 0, \ldots, 4\}$ is the distance distribution of the graph $\mathcal{G}_p$. Now, the cardinals $W_1(0) = |H|$ and $W_4(0) \geq |\gamma H|$ can be calculated by Corollary~\ref{cor:cardinalcoset}. Also, by  Lemma~\ref{lem:degree} it can be computed that $|N_p(1)|=n+1$ and $|N_p(c)|\geq n$. Thus, the obtained bounds for the distance distribution are summarized as follows:  $$\begin{array}{ll}
W_0(0) = |\{0\}|						         &	=1\\
W_1(0) = |H|							         &	=1\cdot 2n\\
W_2(0) = 2n\cdot |N_p(1)\setminus \{0,1\}|	     &	\geq (n-1)\cdot 2n\\
W_3(0)	\geq 2n\cdot |N_p(c)\setminus \{c\}|	 &	\geq (n-1)\cdot 2n\\
W_4(0)	\geq |\gamma H|						     &	=1\cdot 2n
\end{array}$$
As a consequence, the total number of vertices satisfies $|\mathbb{Z}[i]/ p \mathbb{Z}[i]|\geq 1+2n(1+(n-1)+(n-1)+1)=4n^2+1>4n^2-4n+1=p^2=|\mathbb{Z}[i]/ p \mathbb{Z}[i]|$, which is a contradiction.
\end{proof}

\subsection{Case $p\equiv 1 \pmod 4$}

Unfortunately, the reasoning made in the previous case fails to give us a contradiction if $p\equiv 1\pmod 4$. Therefore, it will be needed to resort to the tight bound from algebraic geometry obtained in the Hasse--Weil Theorem. Note that, in this case, $p=2n+1$ and the ring $\mathbb{Z}[i]/ p \mathbb{Z}[i]$ contains zero divisors.\par

First, let us prove two technical lemmas that analyze what happens with the zero divisors of the ring.

\begin{lemma}\label{lem:zH} For any proper zero divisor $\zeta\in\mathbb Z[i]/p\mathbb Z[i]$,
	$$
	\zeta H=\{x\zeta\mid x\in\mathbb Z/p\mathbb Z,\ x\neq 0\}.
	$$
\end{lemma}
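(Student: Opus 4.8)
The plan is to prove the two inclusions by a cardinality shortcut: I will show that the left-hand side $\zeta H$ is contained in the right-hand side $\{x\zeta\mid x\in\mathbb Z/p\mathbb Z,\ x\neq 0\}$, and that both sets have exactly $2n$ elements, whence they coincide. The heart of the matter is that, for a proper zero divisor, multiplication by $i$ is the same as multiplication by a suitable scalar, so the whole principal ideal generated by $\zeta$ collapses onto the line of scalar multiples $\mathbb Z/p\mathbb Z\cdot\zeta$.

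First I would observe that a proper zero divisor has norm zero. Indeed, if $\zeta\eta=0$ for some $\eta\neq 0$, then $\mathcal N(\zeta)\mathcal N(\eta)=\mathcal N(\zeta\eta)=0$ in the field $\mathbb Z/p\mathbb Z$; if $\mathcal N(\zeta)$ were nonzero, then $\mathcal N(\zeta)^{-1}\zeta^*$ would be an inverse of $\zeta$, forcing $\eta=0$. So, writing $\zeta=b_1+b_2 i$, we have $b_1^2+b_2^2=0$, and necessarily $b_2\neq 0$ (otherwise $b_1=0$ and $\zeta=0$). Put $x_0=b_1 b_2^{-1}$; then $x_0^2=b_1^2 b_2^{-2}=-1$, and the direct computation $\zeta i=-b_2+b_1 i=x_0\zeta$ holds, using $b_1^2=-b_2^2$. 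Expanding an arbitrary $\mu=c+di$, this gives $\zeta\mu=c\zeta+d(\zeta i)=(c+dx_0)\zeta$, so every element of the principal ideal $\zeta(\mathbb Z[i]/p\mathbb Z[i])$ is a scalar multiple of $\zeta$. In particular $\zeta H\subseteq\{x\zeta\mid x\in\mathbb Z/p\mathbb Z\}$; and since every $\beta\in H$ is a unit (its norm is $1$), no product $\zeta\beta$ vanishes, so in fact $\zeta H\subseteq\{x\zeta\mid x\neq 0\}$.

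It then remains to count. By Corollary~\ref{cor:cardinalcoset}, whose proof explicitly handles zero divisors, $|\zeta H|=2n$. For the right-hand side, the map $x\mapsto x\zeta$ is injective on $\mathbb Z/p\mathbb Z\setminus\{0\}$: if $x\zeta=y\zeta$ with $x\neq y$, then $(x-y)\zeta=0$ with $x-y$ a nonzero scalar, hence a unit, forcing the contradiction $\zeta=0$. Since here $p=2n+1$, the right-hand side therefore has $p-1=2n$ elements. A subset of size $2n$ sitting inside a set of size $2n$ must exhaust it, so the inclusion established above is an equality.

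The step I expect to be the crux is the identity $\zeta i=x_0\zeta$. This is precisely where the standing hypothesis $p\equiv 1\pmod 4$ is used: the scalar $x_0=b_1 b_2^{-1}$ is a square root of $-1$, and its existence is exactly what makes $\zeta$ a proper zero divisor and what lets the principal ideal degenerate to a line. Once this identity is in hand, the rest is routine cardinality bookkeeping resting on Corollary~\ref{cor:cardinalcoset}.
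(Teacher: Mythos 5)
Your proof is correct and follows essentially the same route as the paper's: both establish the single inclusion $\zeta H\subseteq\{x\zeta\mid x\neq 0\}$ and then conclude equality from the cardinality count $|\zeta H|=2n$ (Corollary~\ref{cor:cardinalcoset}) versus $p-1=2n$ scalar multiples. The only cosmetic difference is that you package the inclusion through the identity $\zeta i=x_0\zeta$ with $x_0^2=-1$ and then invoke linearity, whereas the paper verifies $x\zeta=\beta\zeta$ directly with $x=a-bv u^{-1}$; both rest on exactly the same relation $u^2=-v^2$, and indeed your scalar $c+dx_0$ coincides with the paper's $x$.
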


\begin{proof} On the one hand, by Corollary~\ref{cor:cardinalcoset}, the cardinal $|\zeta H| $ is $2n$. On the other hand, $|\{x\zeta\mid x\in\mathbb Z/p\mathbb Z,\ x\neq 0\}|$ has $p-1=2n$ elements. Since both sets have the same size, it is enough to prove one inclusion to show the sets equality. Therefore, let us prove
the left to right inclusion.

Let $\beta=a+bi$ be an element of norm 1 and $\zeta=u+vi$ a proper zero divisor, hence of norm 0.
As $\zeta\neq0$ and $\mathbb Z/p\mathbb Z$ is a field, both $u$ and $v$ are nonzero.
Let us define $x=a-b\frac{v}{u}\in \mathbb Z/p\mathbb Z$. Therefore,
\begin{multline*}
x\zeta=(a-b\frac{v}{u})(u+vi)
=(au-bv)+(av-b\frac{v^2}{u})i\\
=(au-bv)+(av-b\frac{-u^2}{u})i
=(au-bv)+(av+bu)i\\
=(a+bi)(u+vi)
=\beta\zeta.
\end{multline*}
Finally, note that if $x$ were zero, then $\beta$ would be a zero divisor, contradicting $\mathcal N(\beta)=1$.
\end{proof}

The following lemma has its inspiration in Lemma~\ref{lem:degree}, but with the intention of generalizing to the case of zero divisors.

\begin{lemma}\label{lem:neigh0}For any proper zero divisor $\zeta\in\mathbb Z[i]/p\mathbb Z[i]$,
	$$
	\{\mathcal N(\beta+\zeta)\mid \mathcal N(\beta)=1\}=
		\mathbb Z/p\mathbb Z\setminus \{1\}.
	$$
\end{lemma}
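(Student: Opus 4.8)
The claim is that for a proper zero divisor $\zeta$, as $\beta$ ranges over the $2n$ elements of norm $1$, the quantity $\mathcal N(\beta+\zeta)$ takes exactly the $p-1=2n$ values $\mathbb Z/p\mathbb Z\setminus\{1\}$. My plan is to compute $\mathcal N(\beta+\zeta)$ explicitly via Lemma~\ref{lem:normsum}. Since $\zeta$ is a proper zero divisor, $\mathcal N(\zeta)=0$, so Lemma~\ref{lem:normsum} gives
$$\mathcal N(\beta+\zeta)=\mathcal N(\beta)+\mathcal N(\zeta)+2\Re(\beta\zeta^*)=1+2\Re(\beta\zeta^*).$$
Thus the problem reduces entirely to understanding the set $\{\Re(\beta\zeta^*)\mid \mathcal N(\beta)=1\}$: I must show it equals $\mathbb Z/p\mathbb Z\setminus\{0\}$, because then $1+2\Re(\beta\zeta^*)$ ranges over $\mathbb Z/p\mathbb Z\setminus\{1\}$ (note $2$ is invertible since $p$ is odd).

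\textbf{Reducing to a linear question.} First I would observe that $1\notin\{\mathcal N(\beta+\zeta)\}$: if $\mathcal N(\beta+\zeta)=1$ then $\Re(\beta\zeta^*)=0$, and I would rule this out using Corollary~\ref{cor:zerodivisors} (after normalizing, $1+\eta$ with $\mathcal N(\eta)=1$ would be a zero divisor, forcing $\beta+\zeta$ and $\beta$ to coincide in a way that contradicts $\zeta\neq 0$). For the main inclusion, the natural approach is to exploit Lemma~\ref{lem:zH}, which says $\zeta H=\{x\zeta\mid x\in\mathbb Z/p\mathbb Z,\ x\neq 0\}$. This tells me that the product $\beta\zeta$, as $\beta$ runs over $H$, runs over all nonzero scalar multiples of $\zeta$. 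Replacing $\zeta$ by $\zeta^*$ (which is again a proper zero divisor, since $\mathcal N(\zeta^*)=\mathcal N(\zeta)=0$), the analogous statement gives $\beta\zeta^*\in\{x\zeta^*\mid x\neq 0\}$ as $\beta$ ranges over $H$, and in fact this is a bijection between $H$ and the $2n=p-1$ nonzero multiples of $\zeta^*$.

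\textbf{Finishing the count.} Writing $\zeta^*=u-vi$ with $u,v$ both nonzero (as in Lemma~\ref{lem:zH}), I get $\Re(x\zeta^*)=xu$, so as $x$ runs over the $p-1$ nonzero scalars, $\Re(\beta\zeta^*)=xu$ runs over all nonzero elements of $\mathbb Z/p\mathbb Z$ (since $u$ is invertible, $x\mapsto xu$ permutes the nonzero scalars). Hence $\{\Re(\beta\zeta^*)\mid\mathcal N(\beta)=1\}=\mathbb Z/p\mathbb Z\setminus\{0\}$, and substituting back yields $\{\mathcal N(\beta+\zeta)\}=\{1+2t\mid t\neq 0\}=\mathbb Z/p\mathbb Z\setminus\{1\}$, as claimed. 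I would also note the cardinalities match automatically: there are exactly $2n$ values of $\beta$ and exactly $2n$ target values, and the map $\beta\mapsto xu$ is injective by the bijection from Lemma~\ref{lem:zH}, so no value is hit twice.

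\textbf{Main obstacle.} The calculation itself is routine once Lemma~\ref{lem:normsum} collapses $\mathcal N(\beta+\zeta)$ to a linear expression in $\Re(\beta\zeta^*)$. The one step requiring genuine care is the passage from $\beta\zeta^*$ to its real part: I must confirm that $\Re(x\zeta^*)$ really sweeps out all nonzero residues rather than landing in a proper subset. The key point making this work is that $\zeta$ being a \emph{proper} zero divisor forces both $u=\Re(\zeta^*)$ and $v$ to be nonzero (a zero divisor in $\mathbb Z[i]/p\mathbb Z[i]$ for $p\equiv 1\pmod 4$ has both parts nonzero, since neither a purely real nor purely imaginary nonzero element has norm $0$). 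With $u\neq 0$ invertible, the real-part map on the nonzero multiples is a bijection onto $\mathbb Z/p\mathbb Z\setminus\{0\}$, and the excluded value $1$ on the target side corresponds precisely to the excluded scalar $x=0$, i.e.\ to $\beta+\zeta$ degenerating.
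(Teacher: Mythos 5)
Your proof is correct and takes essentially the same route as the paper's: both arguments reduce $\mathcal N(\beta+\zeta)$ to the linear expression $1+2xu$ by combining Lemma~\ref{lem:normsum} with the parametrization $\zeta H=\{x\zeta\mid x\in\mathbb Z/p\mathbb Z,\ x\neq 0\}$ of Lemma~\ref{lem:zH} (the paper first factors out the unit $\beta$ to work with $\mathcal N(1+\beta\zeta)$, while you apply Lemma~\ref{lem:zH} to $\zeta^*$ --- a cosmetic difference), and both conclude from $u\neq 0$ that $1+2xu$ sweeps out exactly $\mathbb Z/p\mathbb Z\setminus\{1\}$ as $x$ ranges over the nonzero scalars. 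Your preliminary exclusion of the value $1$ via Corollary~\ref{cor:zerodivisors} is redundant, since (as you yourself note at the end) the excluded value corresponds precisely to the excluded scalar $x=0$, but it is not an error.
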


\begin{proof} Let $\zeta=u+vi$ be a proper zero divisor. By Lemma~\ref{lem:zH},
\begin{align*}
\{\mathcal N(\beta+\zeta)\mid \mathcal N(\beta)=1\}
&=\{\mathcal N(1+\beta\zeta)\mid \mathcal N(\beta)=1\}\\
&=\{\mathcal N(1+x\zeta)\mid x\in\mathbb Z/p\mathbb Z,\ x\neq 0\}\\
&=\{1+2xu\mid x\in\mathbb Z/p\mathbb Z,\ x\neq 0\}.
\end{align*}

To finish, note that $y=1+2xu$ with $x\neq 0$ has solution for every value of $y$ except 1.
\end{proof}

The previous lemma indicates that proper zero divisors are neighbours of every vertex at distance 2 from 0, and hence they are at distance 3 from 0. Then, the following lemma gives a polynomial
description of the sets $N_p(t)$.

\begin{lemma}\label{lem:polynomial}Let $p\equiv 1\pmod 4$ be a prime in $\mathbb Z$. For any $t\in\mathbb Z/p\mathbb Z$, $t\neq 0$, it is obtained that
	$$N_p(t)=\{x^{-1}(x+1)(x+t)\mid x\in \mathbb Z/p\mathbb Z,\ x\neq 0\}.$$
\end{lemma}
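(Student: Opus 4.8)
The plan is to reduce the norm $\mathcal N(1+\beta)$ to a quantity depending only on the real part of $\beta$, and then to match the resulting description of $N_p(t)$ with the claimed polynomial image by a change of variables that is valid precisely because $-1$ is a square modulo $p$.

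First I would write $\beta = a + bi$ and apply Lemma~\ref{lem:normsum} to the pair $1,\beta$, obtaining $\mathcal N(1+\beta) = 1 + \mathcal N(\beta) + 2\Re(\beta) = 1 + t + 2a$ whenever $\mathcal N(\beta) = a^2 + b^2 = t$. Hence $\mathcal N(1+\beta)$ depends on $\beta$ only through its real part $a$, and the condition $\mathcal N(\beta)=t$ merely requires that $a$ admit some $b$ with $b^2 = t - a^2$, i.e.\ that $t - a^2$ be a square (possibly $0$) in $\mathbb Z/p\mathbb Z$. This yields the preliminary description
\[ N_p(t) = \{\, 1 + t + 2a \mid a \in \mathbb Z/p\mathbb Z,\ t - a^2 \text{ is a square}\,\}. \]

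Next I would rewrite the claimed right-hand side in the same shape: for $x\neq 0$ one has $x^{-1}(x+1)(x+t) = 1 + t + (x + t x^{-1})$. So it suffices to prove the set equality $\{2a \mid t - a^2 \text{ is a square}\} = \{x + t x^{-1} \mid x \neq 0\}$. Because $p \equiv 1 \pmod 4$, the first item of Theorem~\ref{theo:qrl} furnishes an $\omega \in \mathbb Z/p\mathbb Z$ with $\omega^2 = -1$, and this element is what links the two descriptions. For the inclusion $\subseteq$, given an admissible $a$ I choose $b$ with $b^2 = t - a^2$ and set $x = a + b\omega$; then $x(a - b\omega) = a^2 + b^2 = t$, so $t x^{-1} = a - b\omega$ and $x + t x^{-1} = 2a$ (here $x \neq 0$, since otherwise $t=0$). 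For $\supseteq$, given $x\neq 0$ I put $a = (x + t x^{-1})/2$; a direct computation gives $t - a^2 = -\big((x - t x^{-1})/2\big)^2$, which is a square precisely because $-1 = \omega^2$ is one, so $a$ is admissible and $2a = x + t x^{-1}$.

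The only computation involved is the pair of identities $x + t x^{-1} = 2a$ and $t - a^2 = -((x - t x^{-1})/2)^2$, and the single genuine point is the use of the hypothesis $p \equiv 1 \pmod 4$: it is exactly what makes $-1$ a square, which simultaneously produces the square root $\omega$ used in the forward direction and guarantees that $t - a^2$ is a square in the reverse direction. I expect no further obstacle, since $t \neq 0$ automatically forces $x \neq 0$ and keeps every quantity well defined.
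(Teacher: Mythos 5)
Your proof is correct and follows essentially the same route as the paper's: both hinge on the square root $\omega$ (the paper's $r$) of $-1$ guaranteed by $p\equiv 1\pmod 4$, use the substitution $x=a+\omega b$ for the forward inclusion, and reconstruct $\beta$ with $\Re(\beta)=2^{-1}(x+tx^{-1})$ for the reverse one. The only difference is presentational: where the paper verifies the needed identities by direct polynomial computation, you make the underlying factorization $t=(a+b\omega)(a-b\omega)$ and the square $t-a^2=-\bigl((x-tx^{-1})/2\bigr)^2$ explicit, which is arguably cleaner but not a different argument.
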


\begin{proof}
By the first item of Theorem~\ref{theo:qrl}, there exists $r\in\mathbb Z/p\mathbb Z$ such that $r^2=-1$.
Note that $x^{-1}(x+1)(x+t)=x+tx^{-1}+t+1$. First, let us prove the left to right inclusion of the sets.
In this aim, let $\beta=a+bi$, $\mathcal N(\beta)=a^2+b^2=t$ for a generic element $\mathcal N(1+\beta)$
in $N_p(t)$. Thus, let us check that $x=a+rb$ satisfies $\mathcal N(1+\beta)=x+tx^{-1}+t+1$.
By Lemma~\ref{lem:normsum}, $x\mathcal N(1+\beta)=x(\mathcal N(1)+\mathcal N(\beta)+2\Re(\beta))=x(t+1)+2ax$.
Hence,
\begin{align*}
	x&(x+tx^{-1}+t+1)-x\mathcal N(1+\beta)\\
	&=x^2+t-2ax\\
	&=t+(a+rb)^2-2a(a+rb)\\
	&=t+(a^2+2rab+r^2b^2)-(2a^2+2rab)\\
	&=t-a^2+r^2b^2\\
	&=t-a^2-b^2\\
	&=0
\end{align*}

For the right to left inclusion, let $x\neq 0$ and $y=x^{-1}(x+1)(x+t)$ being an element
of $\{x^{-1}(x+1)(x+t)\mid x\in \mathbb Z/p\mathbb Z,\ x\neq 0\}$.
Now, define $\beta=x+x^{-1}(t-x^2)+2^{-1}x^{-1}(t-x^2)ri$. Then, by calculation $\mathcal N(\beta)=(x+x^{-1}(t-x^2))^2+(2^{-1}x^{-1}(t-x^2)r)^2=t$.
Moreover, $\mathcal N(1+\beta)=1+t+2\Re(\beta)=1+t+2x+x^{-1}(t-x^2)=y$, which ends the proof.
\end{proof}

The intersection between $N_p(1)$ and $N_p(t)$ will be given by the roots of the polynomial $P_t(x,y)=y(x+1)^2-x(y+1)(y+t)$. In order to apply the Hasse--Weil bound, the polynomial must be irreducible. Therefore, let us introduce the  following definition and two useful results in Lemma~\ref{lemma:absolute} and Corollary~\ref{cor:irreducible}.

\begin{definition} Given a field $\mathbb F$, a polynomial $P\in \mathbb F[x,y]$ is called \textsl{absolutely irreducible} if it is irreducible in the algebraic closure of $\mathbb F$.
\end{definition}

\begin{lemma}\label{lemma:absolute} For any prime $p$, the polynomial $P_t(x,y)=y(x+1)^2-x(y+1)(y+t) \in \mathbb{Z}_p[x, y]$ is absolutely irreducible for $t\neq 0,1$.
\end{lemma}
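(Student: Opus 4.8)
The plan is to view $P_t$ as a quadratic in the single variable $y$ over a rational function field and apply the discriminant test, after reducing absolute irreducibility to irreducibility over that field. Throughout I work over $\mathbb F=\overline{\mathbb Z_p}$, the algebraic closure, since ``absolutely irreducible'' means irreducible there. Note first that $t\neq 0,1$ forces $p\geq 3$, so $p$ is odd (for $p=2$ the hypothesis $t\neq 0,1$ is vacuous and there is nothing to prove). Expanding and collecting in $y$ gives
$$P_t=-x\,y^2+\bigl(x^2+(1-t)x+1\bigr)y-tx,$$
a degree-$2$ polynomial in $y$ with coefficients $A=-x$, $B=x^2+(1-t)x+1$, $C=-tx$ in $\mathbb F[x]$.

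First I would reduce to a one-variable problem. Since the coefficient $B$ has nonzero constant term ($B(0)=1$) while $A$ and $C$ are multiples of $x$, the content $\gcd(A,B,C)$ in the UFD $\mathbb F[x]$ equals $1$; hence $P_t$ is primitive as an element of $\mathbb F[x][y]$. By Gauss's lemma, $P_t$ is irreducible in $\mathbb F[x,y]$ if and only if it is irreducible in $\mathbb F(x)[y]$. As $A=-x\neq 0$ and $\operatorname{char}\mathbb F\neq 2$, a quadratic $Ay^2+By+C$ is irreducible over the field $\mathbb F(x)$ precisely when its discriminant $\Delta=B^2-4AC$ is not a square in $\mathbb F(x)$; and because $\mathbb F[x]$ is integrally closed, a polynomial is a square in $\mathbb F(x)$ iff it is a square in $\mathbb F[x]$. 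So everything comes down to showing that
$$\Delta=\bigl(x^2+(1-t)x+1\bigr)^2-4tx^2$$
fails to be a perfect square in $\mathbb F[x]$ when $t\neq 0,1$.

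Finally I would settle the squareness of $\Delta$ by a direct coefficient comparison. Here $\Delta$ is monic of degree $4$ and palindromic, namely $\Delta=x^4+2(1-t)x^3+(t^2-6t+3)x^2+2(1-t)x+1$, so if it were a square it would have to equal $(x^2+ux+v)^2$ for some $u,v\in\mathbb F$. Matching the coefficients of $x^3$, of $x^2$, and of $x^0$ forces $u=1-t$, $v=1-2t$, and $v^2=1$, whence $(1-2t)^2=1$, i.e. $4t(t-1)=0$. Since $p$ is odd this gives $t\in\{0,1\}$, contradicting the hypothesis; therefore $\Delta$ is not a square and $P_t$ is absolutely irreducible. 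The computation itself is routine; the only points demanding care are the reduction step (primitivity together with Gauss's lemma, so that no spurious factor depending on $x$ alone is overlooked) and the standing assumption $\operatorname{char}\mathbb F\neq 2$ that legitimizes the discriminant criterion, both of which I have arranged above.
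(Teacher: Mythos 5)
Your proof is correct, but it takes a genuinely different route from the paper's. The paper argues directly on the shape of a hypothetical factorization: since $P_t$ is a cubic whose leading form is $xy(x-y)$, any factorization must be (quadratic)$\times$(linear) with leading forms multiplying to $xy(x-y)$, and a three-case coefficient comparison shows that a factorization exists only for $t=0$ (giving $P_0=(x^2-xy+x+1)y$) and $t=1$ (giving $P_1=(xy-1)(x-y)$). You instead treat $P_t$ as the quadratic $-xy^2+\bigl(x^2+(1-t)x+1\bigr)y-tx$ in $y$ over $\mathbb F[x]$, use primitivity plus Gauss's lemma to reduce absolute irreducibility to irreducibility over $\mathbb F(x)$, and then apply the discriminant criterion, reducing everything to the single claim that the quartic $\Delta=\bigl(x^2+(1-t)x+1\bigr)^2-4tx^2$ is not a perfect square in $\mathbb F[x]$ --- which your coefficient matching ($u=1-t$, $v=1-2t$, $v^2=1$, hence $4t(t-1)=0$) settles correctly. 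The trade-offs: your argument is more mechanical and avoids the case analysis, but it needs the odd-characteristic caveat for the discriminant test (which you handle properly, noting the statement is vacuous for $p=2$) and the Gauss's lemma apparatus; the paper's argument is characteristic-free, entirely elementary, and has the side benefit of exhibiting the explicit factorizations at $t=0,1$, showing the hypothesis $t\neq 0,1$ is sharp --- a fact your proof also yields (the three matched coefficients are consistent exactly at $t\in\{0,1\}$) but does not display.
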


\begin{proof}
The polynomial $P_t(x, y)=xy(x-y)+(1-t)xy+y-tx$ has degree 3. If $P_t(x, y)$ is not absolute irreducible, then there exist polynomials $A(x, y)$, $B(x, y)$ with coefficients in the algebraic closure of $\mathbb Z/p\mathbb Z$ such that $P_t(x, y)=AB$ with $\deg A(x, y)=2$ and $\deg B(x, y)=1$. Furthermore, the product of the leading terms of $A(x, y)$ and $B(x, y)$ must be $xy(x-y)$. Let us consider the following three mutually exclusive cases, depending on polynomials $A(x, y)$ and $B(x, y)$
\begin{enumerate}
\item Case $A(x, y)=(xy+ax+by+c)$, $B(x, y)=(x-y+d)$. The coefficient of $x^2$ in $A(x, y)\cdot B(x, y)$ is $a$ and the one of $y^2$ is $-b$. By hypothesis, both are 0 in $P_t(x, y)$. Then, the coefficient of $xy$ is $d=1-t$, the coefficient of $x$ is $c=-t$ and the coefficient of $y$ is $-c=t=1$. Hence, for $t=1$ there exists the factorization $P_1(x, y)=(xy-1)(x-y)$.
\item Case $A(x, y)=(x(x-y)+ax+by+c)$, $B(x, y)=(y+d)$. Now, the coefficient of $x^2$ in $A(x, y)\cdot B(x, y)$ is $d=0$ and the coefficient of $y^2$ is $b=0$. Then, the coefficient of $xy$ is $a=1-t$, the coefficient of $x$ is $0=-t$ and the coefficient of $y$ is $c=1$. Hence, for $t=0$ there exists the factorization $P_0(x, y)=(x^2-xy+x+1)y$.
\item Case $A(x, y)=(y(x-y)+ax+by+c)$, $B(x, y)=(x+d)$. The coefficient of $x^2$ is $a=0$ and the coefficient of $y^2$ is $-d=0$. Then, the coefficient of $y$ would be $0=1$, which implies that there exists no factorization.
\end{enumerate}
Finally, there are factorizations of $P_t(x, y)$ only for $t=0$ and $t=1$, which proves the result.
\end{proof}

\begin{corollary}\label{cor:irreducible} The homogeneous polynomial
	$$^hP_t(x,y,z)=xy(x-y)+(1-t)xyz+(y-tx)z^2$$ is absolutely irreducible for $t\neq 0,1$.
\end{corollary}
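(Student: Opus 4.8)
The plan is to deduce the corollary directly from Lemma~\ref{lemma:absolute} via the standard correspondence between a plane curve and its projective closure, so that no new computation is needed. First I would observe that $^hP_t(x,y,z)$ is precisely the homogenization of the polynomial $P_t(x,y)$ of Lemma~\ref{lemma:absolute}: setting $z=1$ recovers $P_t(x,y)=xy(x-y)+(1-t)xy+y-tx$, and since $P_t$ has total degree $3$, the degree-$3$ form $^hP_t$ is indeed its homogenization. The point to record at the outset is that the monomial $x^2y$, coming from $xy(x-y)$, occurs in $^hP_t$ with no factor of $z$; hence $z\nmid {}^hP_t$.

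Next I would argue by \textsl{reductio ad absurdum}. Suppose $^hP_t$ admitted a nontrivial factorization over the algebraic closure $\overline{\mathbb Z_p}$, say $^hP_t=Q_1Q_2$ with each factor of positive degree. Since the factors of a homogeneous polynomial over a field are themselves homogeneous, each $Q_i$ is homogeneous, of degree $d_i\geq 1$ with $d_1+d_2=3$. Because $z\nmid {}^hP_t$, neither $Q_i$ is divisible by $z$, so each $Q_i$ carries a monomial $x^ay^b$ with $a+b=d_i$ and no $z$. Consequently, dehomogenizing by $z=1$ preserves the top degree of each factor, and $\deg Q_i(x,y,1)=d_i\geq 1$.

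Finally, evaluating the supposed factorization at $z=1$ gives $P_t(x,y)=Q_1(x,y,1)\,Q_2(x,y,1)$, a product of two polynomials of positive degree, contradicting the absolute irreducibility of $P_t$ proved in Lemma~\ref{lemma:absolute}. Therefore $^hP_t$ has no nontrivial factorization over $\overline{\mathbb Z_p}$ and is absolutely irreducible for $t\neq 0,1$.

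The only delicate point—and the one I expect to be the main, though minor, obstacle—is justifying that the hypothetical factors $Q_i$ are homogeneous and that passing to $z=1$ does not collapse their degrees. Both rest entirely on the observation that $z\nmid {}^hP_t$, witnessed by the surviving $x^2y$ term; once this is noted, the remainder of the argument is purely formal.
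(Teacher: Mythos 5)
Your proposal is correct and takes essentially the same route as the paper: the paper's proof is the one-line observation that a factorization of $^hP_t$ would, upon evaluation at $z=1$, yield a factorization of $P_t(x,y)$, contradicting Lemma~\ref{lemma:absolute}. Your write-up additionally justifies the step the paper leaves implicit---that since $z\nmid{}^hP_t$ (witnessed by the $x^2y$ term), no hypothetical homogeneous factor is divisible by $z$, so dehomogenization preserves the positive degree of each factor---which is exactly the right detail to pin down.
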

\begin{proof}
If $^hP_t(x,y,z)$ had a factorization, then its evaluation at $z=1$ would be a factorization of $P_t(x, y)$, contradicting Lemma~\ref{lemma:absolute}.
\end{proof}

Finally, let us conclude the section by proving the main result.

\begin{theorem}
If $p$ is a prime such that $p\equiv 1\pmod 4$ and $p>5$, then the diameter of $\mathcal G_p$ is 3.
\end{theorem}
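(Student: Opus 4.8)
The plan is to argue by \emph{reductio ad absurdum}, assuming the diameter of $\mathcal G_p$ exceeds $3$. As the graph is connected and vertex-transitive, a shortest path from $0$ to a farthest vertex passes through some vertex $\gamma$ at distance exactly $4$ from $0$; I fix such a $\gamma$ and set $t=\mathcal N(\gamma)$. By Lemma~\ref{lem:neigh0} and the discussion after it, every proper zero divisor is at distance $3$ from $0$, and $0$ is the only vertex at distance $0$, so $\gamma$ must be a unit and $t\neq 0$. Also $t\neq 1$, because the vertices of norm $1$ are exactly the elements of $H$, at distance $1$. Hence $t\neq 0,1$, which is precisely the hypothesis under which Corollary~\ref{cor:irreducible} applies.

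The crux is the following reduction, which I would state as a claim: if $N_p(1)\cap N_p(t)$ contains a \emph{nonzero} value $v$, then $d_{\mathcal G_p}(\gamma,0)\le 3$, contradicting the choice of $\gamma$. To see this, write $v=\mathcal N(1+\beta_1)$ with $\mathcal N(\beta_1)=1$ and $v=\mathcal N(1+\beta_2)$ with $\mathcal N(\beta_2)=t$. The vertex $1+\beta_1$ is a sum of the two generators $1,\beta_1\in H$, so $d(0,1+\beta_1)\le 2$. Since $v\neq 0$, Lemma~\ref{lem:cosetnorm} identifies the set of norm-$v$ vertices with the coset $(1+\beta_1)H$, and because multiplication by an element of $H$ is a graph automorphism fixing $0$, \emph{all} norm-$v$ vertices are at distance at most $2$ from $0$. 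In particular $d(0,1+\beta_2)\le 2$; as $\beta_2$ and $1+\beta_2$ differ by $1\in H$ they are adjacent, so $d(0,\beta_2)\le 3$. Finally $\mathcal N(\beta_2)=t=\mathcal N(\gamma)$ places $\beta_2$ and $\gamma$ in one $H$-orbit, whence $d(0,\gamma)\le 3$.

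It remains to produce such a nonzero $v$. By Lemma~\ref{lem:polynomial}, $N_p(1)=\{(x+1)^2x^{-1}\mid x\neq 0\}$ and $N_p(t)=\{y^{-1}(y+1)(y+t)\mid y\neq 0\}$, so every affine point $(x,y)$ of the cubic $P_t(x,y)=y(x+1)^2-x(y+1)(y+t)=0$ with $x,y\neq 0$ and $x\neq -1$ yields a common nonzero value $v=(x+1)^2x^{-1}\in N_p(1)\cap N_p(t)$. Since the homogenization ${}^hP_t$ is absolutely irreducible for $t\neq 0,1$ (Corollary~\ref{cor:irreducible}) and has degree $3$, the Hasse--Weil bound for plane curves gives at least $p+1-(3-1)(3-2)\sqrt p=p+1-2\sqrt p$ projective points over $\mathbb Z/p\mathbb Z$. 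I would then discard the bounded set of inadmissible points: the three points at infinity (where $z=0$ forces $xy(x-y)=0$), the unique affine point on the axes (checking that $x=0$ or $y=0$ forces $(x,y)=(0,0)$), and the at most two points with $x=-1$ (forcing $y\in\{-1,-t\}$). This leaves at least $p-5-2\sqrt p$ admissible points.

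The principal obstacle, and the source of the hypothesis $p>5$, is to make this count strictly positive for \emph{every} admissible prime rather than merely asymptotically. Concretely I must check $p-5-2\sqrt p>0$, i.e.\ $\sqrt p>1+\sqrt 6$, down to the smallest prime with $p\equiv 1\pmod 4$ and $p>5$, namely $p=13$, where the estimate gives $8-2\sqrt{13}\approx 0.79>0$. Any one surviving point then furnishes the nonzero $v\in N_p(1)\cap N_p(t)$ needed in the claim, closing the contradiction; all the delicacy lies in bounding the $O(1)$ excluded points sharply enough that the Hasse--Weil estimate still survives at $p=13$.
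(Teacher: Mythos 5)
Your proof is correct and follows essentially the same route as the paper: the same \textsl{reductio ad absurdum} from a distance-4 vertex to the non-existence of a nonzero value in $N_p(1)\cap N_p(t)$, the same polynomial reformulation via Lemma~\ref{lem:polynomial}, and the same Hasse--Weil count on the absolutely irreducible cubic of Corollary~\ref{cor:irreducible}. The only divergence is the endgame: you discard all six inadmissible projective points at once and observe that the resulting bound $p-5-2\sqrt{p}\approx 0.79>0$ at $p=13$ is a lower bound on an integer count and hence already yields an admissible point, whereas the paper, whose equivalent bound $|V_t|\geq p-2-2\sqrt{p}\approx 3.79$ is not visibly $\geq 4$, disposes of $p=13$ by a separate direct computation; this is a small but genuine simplification.
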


\begin{proof} Let us proceed again by \textsl{reductio ad absurdum}. First, let us assume the existence of a vertex $\gamma$ at distance 4 from 0 in $\mathcal{G}_p$, with $p$ fulfilling the hypothesis of the statement. Let $t=\mathcal N(\gamma)$. Note that $t \neq 1$ since the vertices with norm equal to 1 are at distance 1. Also, $t \neq 0$ by Lemma~\ref{lem:neigh0}.
Hence, by Lemma~\ref{lem:cosetnorm}, the vertices with norm in the set $N_p(t)\setminus \{0\}$ are at distance at least 3.
Meanwhile, the vertices with norm in $N_p(1)\setminus \{0\}$ are at distance at most 2 from 0. Therefore, the intersection of previous two sets is $N_p(1)\cap N_p(t)=\{0\}$.

Now, using polynomial notation, previous sets equality is equivalent, by Lemma~\ref{lem:polynomial}, to the non-existence of solutions to $x^{-1}(x+1)^2=y^{-1}(y+1)(y+t)$ other than $x=-1$.
Let us highlight that the solution $x=-1$ corresponds with norm 0. Thus, vertices in $H$ have vertex 0 as their neighbour, while vertices in $\gamma H$ have as some of their neighbours vertices that are proper zero divisors.

The contradiction will be obtained when proving the existence of a solution to $P_t(x,y)=0$ other than the trivial ones $(x,y)\in\{(0,0),(-1,-1),(-1,-t)\}$. To this aim, let us define the varieties $$V_t=\{(x,y)\in (\mathbb Z/p\mathbb Z)^2\mid P_t(x,y)=0\},$$ $$X_t=\{(x:y:z)\in \mathbb P_{\mathbb Z/p\mathbb Z}^2\mid {}^hP_t(x,y,z)=0\},$$
where $\mathbb P_{\mathbb Z/p\mathbb Z}^2$ denotes the projective space of dimension 2 over $\mathbb Z/p\mathbb Z$. The notation $(x:y:z)$ indicates a projective point, which is the same point as $(\lambda x:\lambda y:\lambda z)$ for any $\lambda \neq 0$. Thus, affine solutions can be recovered by taking $\lambda =z^{-1}$; except for solutions $(x:y:0)$, which are the points at the infinite.

Hasse--Weil's theorem~\cite{Cohen} states that $$\bigl||X_t|-(p+1)\bigr|\leq 2\sqrt{p},$$
for absolutely irreducible polynomial curves $X_t$ of degree 3. Note that, by Corollary~\ref{cor:irreducible}, Hasse--Weil's theorem can be applied to $^hP_t(x,y,z)$. Therefore,
$$
	|X_t| \geq p+1-2\sqrt{p}.
$$

Now, the only 3 projective solutions for ${}^hP_t(x,y,z)=0$ with $z=0$ are $(x:y:z)\in\{(0:1:0),(1:0:0),(1:1:0)\}$. Thus, $|V_t| = |X_t| - 3$, which implies:
	$$
	|V_t| \geq p-2-2\sqrt{p}.
	$$
As a consequence, those primes $p$ such that $|V_t|\geq 4$ provide the expected contradiction. Clearly, if $p\geq 17$ then,
	$$|V_t|\geq p-2-2\sqrt{p} \geq 17 -2-2\sqrt{17}\geq 6.7.$$

Finally, the unique prime $p\equiv 1 \pmod 4$ such that $5 < p < 17$ is 13. In this particular case, it can be computed that $|V_t|\geq 9$ for any $t$, which concludes the proof.
\end{proof}

\begin{remark} $\mathcal{G}_5$ has diameter 4 since vertex $2+2i$ and its associates are at distance 4 from vertex 0.
\end{remark}

\section{Discussion}\label{sec:conclusiones}

In this final section, conclusions of this work and future research will be presented. In the first subsection, the main result is rewritten using parity-check matrices. Besides, a formal proof of the infiniteness of the constructed family of quasi-perfect codes is given. Some considerations on the density of the codes are taken into account. Moreover, other examples of codes presenting greater density and an upper error correction capacity are shown. In the final subsection, the authors exhibit the relations between the graphs considered in the present study with other graph theoretical problems, trying to give a new insight into the perfect Lee codes conjecture formulated by Golomb and Welch more than forty years ago.

\subsection{Quasi-perfect Lee codes}


As it has been proved in previous Sections \ref{sec:correcion} and \ref{sec:diametro}, $\mathcal{G}_p$ has error correction capacity 2 and diameter 3, for any prime $p > 5$ and $p \equiv \pm 5 \pmod{12}$.

Dirichlet's theorem on arithmetic progressions asserts that, in any arithmetic progression whose initial term is coprime with its increment, there are infinitely many primes.
As a natural consequence, congruences can be considered as arithmetic progressions, and therefore it can be obtained:

\begin{corollary} There are infinitely many $n \in \mathbb{N}$ such that $p = 2n \pm 1$, $p \geq 7$ prime in $\mathbb{Z}$, $p \equiv \pm 5 \pmod{12}$.
\end{corollary}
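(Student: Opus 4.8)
The plan is to derive the statement directly from Dirichlet's theorem on arithmetic progressions, which is quoted in the paragraph immediately preceding the corollary. The first step is to unfold the hypothesis $p \equiv \pm 5 \pmod{12}$: since $-5 \equiv 7 \pmod{12}$, this condition is equivalent to $p \equiv 5 \pmod{12}$ or $p \equiv 7 \pmod{12}$. Both of these residues are coprime to the modulus, as $12 = 2^2 \cdot 3$ and $\gcd(5,12) = \gcd(7,12) = 1$, so each of the progressions $\{5 + 12k\}_{k \geq 0}$ and $\{7 + 12k\}_{k \geq 0}$ satisfies the hypothesis needed to apply Dirichlet's theorem.

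Next I would invoke Dirichlet's theorem to conclude that each of these two progressions contains infinitely many primes; either one alone already supplies infinitely many primes $p$ with $p \equiv \pm 5 \pmod{12}$. Since only finitely many such primes lie below 7 (in fact only $p = 5$), there remain infinitely many primes $p \geq 7$ with $p \equiv \pm 5 \pmod{12}$. For each of these, setting $n = 2[\frac{p}{4}]$ yields $p = 2n \pm 1$, exactly as recorded at the beginning of Section~\ref{sec:correcion}.

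The final step is to check that infinitely many admissible primes produce infinitely many admissible dimensions. The assignment $p \mapsto n = 2[\frac{p}{4}]$ is at most two-to-one, because the only primes that can map to a fixed $n$ are the members of $\{2n - 1, 2n + 1\}$. Consequently an infinite family of primes forces an infinite family of values $n$, which is precisely the assertion of the corollary.

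I expect no genuine obstacle: the result is an essentially immediate consequence of Dirichlet's theorem. The only points that warrant a moment's care are the reduction $-5 \equiv 7 \pmod{12}$ that re-expresses the condition as a pair of ordinary congruence classes, the verification that these classes are coprime to 12 so that Dirichlet applies, and the observation that the correspondence between $p$ and $n$ is finite-to-one, ensuring that infinitude of primes transfers to infinitude of $n$.
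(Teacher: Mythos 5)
Your proposal is correct and follows the same route as the paper, which derives the corollary directly from Dirichlet's theorem on arithmetic progressions applied to the congruence classes $\pm 5 \pmod{12}$. You simply make explicit the details the paper leaves implicit (coprimality of the residues with $12$, the exclusion of $p = 5$, and the at-most-two-to-one nature of the map $p \mapsto n = 2\left[\frac{p}{4}\right]$), all of which are verified correctly.
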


Then, when applying the previous result it is obtained:

\begin{corollary}
	The family of graphs $\mathcal{G}_p$ contains infinitely many graphs with error correction capacity 2 and diameter 3.
\end{corollary}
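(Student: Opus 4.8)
The plan is to assemble this statement directly from the results already established, since each prime $p\equiv\pm 5\pmod{12}$ with $p\geq 7$ yields a graph $\mathcal G_p$ carrying both of the required properties. First I would invoke the previous corollary, which guarantees that there are infinitely many primes $p\geq 7$ satisfying $p\equiv\pm 5\pmod{12}$; this is where Dirichlet's theorem on arithmetic progressions does all the real work, applied to the two residue classes $5$ and $7$ modulo $12$. The whole content of the corollary is then a matter of confirming that these infinitely many primes feed into the graph constructions of the earlier sections.

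Next, for each such prime I would verify the two defining properties of $\mathcal G_p$. The error correction capacity is handled immediately by Theorem~\ref{thm:correction}, whose hypothesis $p\equiv\pm 5\pmod{12}$ is exactly what the previous corollary supplies, so $\mathcal G_p$ has error correction capacity $2$. For the diameter, I would observe that any admissible prime satisfies $p\geq 7>5$ and falls into one of the two classes $p\equiv 1\pmod 4$ or $p\equiv 3\pmod 4$; in either case the corresponding theorem of Section~\ref{sec:diametro} applies and yields diameter $3$. Thus every prime produced by the previous corollary gives a graph $\mathcal G_p$ that is simultaneously of error correction capacity $2$ and diameter $3$.

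Finally, I would confirm that this produces infinitely many genuinely distinct graphs rather than one graph counted repeatedly. Since $\mathcal G_p$ has $|\mathbb Z[i]/p\mathbb Z[i]|=p^2$ vertices, distinct primes give graphs on distinct numbers of vertices, so no two of them can coincide, and the infinitude of admissible primes transfers directly to an infinitude of graphs. I do not expect any real obstacle here: the statement is a bookkeeping corollary whose substantive content has already been delivered by Theorem~\ref{thm:correction}, the two diameter theorems, and the number-theoretic input of the previous corollary. The only point deserving a word of care is the passage from \emph{infinitely many primes} to \emph{infinitely many graphs}, which the $p^2$ vertex count settles at once.
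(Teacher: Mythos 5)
Your proposal is correct and takes essentially the same route as the paper, which presents this corollary as an immediate consequence of the preceding one (Dirichlet's theorem giving infinitely many primes $p\geq 7$ with $p\equiv\pm 5\pmod{12}$) combined with Theorem~\ref{thm:correction} and the two diameter theorems of Section~\ref{sec:diametro}, split according to $p\equiv 1$ or $3\pmod 4$. Your closing observation that distinct primes yield non-isomorphic graphs because $\mathcal G_p$ has $p^2$ vertices is a small check the paper leaves implicit, and it is correct.
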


Then, as it was discussed in Section~\ref{sec:grafo-codigo}, each of these graphs induces a 2-quasi-perfect Lee code.

\begin{theorem} \label{theo:paritycheck} Let $p$ be a prime. Let $\{ \beta_1 ,  \ldots , \beta_{2n} \} = \{\pm \beta_1 , \ldots  , \pm \beta_{n} \}$ be the elements
of $\mathbb{Z}[i]/p\mathbb{Z}[i]$ with unitary norm. Let $\mathcal{G}_p = Cay (\mathbb{Z}[i]/p\mathbb{Z}[i], \{ \beta_1 ,  \ldots , \beta_{2n} \})$ and $\mathcal C \subset \mathbb{Z}_{p}^n$ be the code associated to $\mathcal{G}_p$ such that $\mathcal C = \ker \phi$ and $\phi: \mathbb{Z}_{p}^n \longrightarrow \mathbb{Z}[i]/p\mathbb{Z}[i]$ such that $\phi(e_j) = \beta_j$. Then,

$$ M = \begin{pmatrix}
	\Re(\beta_1) & \Re(\beta_2) &   \cdots & \Re(\beta_{n})   \\
	\Im(\beta_1) & \Im(\beta_2) &   \cdots & \Im(\beta_{n}) \\
\end{pmatrix}.
$$

is the parity-check matrix of $\mathcal C$.
\end{theorem}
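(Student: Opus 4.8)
The plan is to establish directly that the null space of $M$ coincides with $\mathcal C=\ker\phi$, which is exactly the defining property of a parity-check matrix. Since $\phi$ is a group homomorphism determined by $\phi(e_j)=\beta_j$, the first step is to evaluate it on an arbitrary word $x=(x_1,\dots,x_n)\in\mathbb Z_p^n$, obtaining
$$\phi(x)=\sum_{j=1}^n x_j\beta_j.$$

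Next I would separate real and imaginary parts. Writing $\beta_j=\Re(\beta_j)+\Im(\beta_j)i$ and using linearity of $\phi$ inside $\mathbb Z[i]/p\mathbb Z[i]$ gives
$$\phi(x)=\Bigl(\sum_{j=1}^n x_j\Re(\beta_j)\Bigr)+\Bigl(\sum_{j=1}^n x_j\Im(\beta_j)\Bigr)i.$$
The key structural remark is that, as an additive group, $\mathbb Z[i]/p\mathbb Z[i]$ is isomorphic to $\mathbb Z_p\times\mathbb Z_p$ via $a+bi\mapsto(a,b)$, because $p\mathbb Z[i]=\{pa+pb\,i\}$ and addition of Gaussian integers acts coordinate-wise on the pair $(\Re,\Im)$. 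Under this identification the two coordinates of $\phi(x)$ are precisely the two entries of $Mx$, viewing $x$ as a column vector.

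Consequently $\phi(x)=0$ holds if and only if both coordinates vanish, that is, if and only if $Mx=0$. This yields $\mathcal C=\ker\phi=\{x\in\mathbb Z_p^n\mid Mx=0\}$, so $M$ is a parity-check matrix of $\mathcal C$. To confirm that $M$ has the correct rank (namely two, matching $|\mathcal C|=p^{n-2}$), I would note that both $1$ and $i$ have unitary norm and hence appear among the $\beta_j$; since they generate $\mathbb Z[i]/p\mathbb Z[i]$ additively, $\phi$ is surjective, forcing the two rows of $M$ to be linearly independent.

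The computation is essentially mechanical, so there is no serious obstacle; the only point requiring care is the clean additive identification of $\mathbb Z[i]/p\mathbb Z[i]$ with $\mathbb Z_p^2$, ensuring that only the additive structure of the quotient—and not the multiplicative or norm structure exploited elsewhere in the paper—is invoked when passing from $\phi(x)=0$ to the two scalar equations encoded by $Mx=0$.
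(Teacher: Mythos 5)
Your proof is correct and takes essentially the same approach as the paper: the paper introduces $\psi(\beta)=(\Re(\beta),\Im(\beta))$ and observes that the map $x\mapsto Mx$ equals $\psi\circ\phi$, so that $\mathcal C=\ker\phi=\ker(\psi\circ\phi)=\{x\mid Mx=0\}$, which is precisely your additive identification of $\mathbb Z[i]/p\mathbb Z[i]$ with $\mathbb Z_p\times\mathbb Z_p$ spelled out coordinate-wise. Your closing verification that $M$ has rank two (via surjectivity of $\phi$, using that $1$ and $i$ are among the generators) is a harmless addition that the paper omits.
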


\begin{proof} Let us denote by $\psi: \mathbb{Z}[i] \longrightarrow \mathbb{Z}^2$ such that $\psi(\beta) = (\Re(\beta) ,  \Im(\beta)).$ Note that the homomorphism defined by the matrix $M$ is equal to the mapping $\psi \circ \phi$. Thus, $\mathcal C = \ker(\phi) = \ker(\psi \circ \phi) = \{x \mid Mx= 0\}.$
\end{proof}

Now, let us give some considerations on the quality of the constructed codes. Note that, since the Lee sphere of radius 2 contains $|B_2|=2n^2+2n+1$ words, the graph induced by any 2-quasi-perfect linear code has at least $2n^2+2n+1$ vertices.
The graphs $\mathcal{G}_p$ constructed in this paper have $p^2$ vertices. Therefore, for the case $p=2n+1$, the number of vertices is $p^2=4n^2+4n+1=2|B_2|-1$. Also, for the case $p=2n-1$, the number of vertices is $p^2=4n^2-4n+1=2|B_2|-8n-1$. Thus, the reached vertices are asymptotically the double of those that would be reached in the graph associated to a perfect code. In other words, the density of the codes presented is $\frac{1}{p^2}$.



Although the obtained density is quite good, for some small cases (low dimensions), graphs with a smaller number of vertices have been computationally found.
Let us consider the following examples.

\begin{table*}
\begin{center}
\begin{tabular}{|c|c|c|c|c|}
  \hline
  $n$ & $p$ & $H$ & $p^2$ & $|B_3^n|$  \\
  \hline
  4 & 13 & $\pm\{1,3+4i,  i,-4+3i\}$ & 169 & 129  \\
  6 & 26 & $\pm\{1,4+4i,9+11i,  i,-4+4i,-11+9i\}$ & 676 & 377  \\
  8 & 41 & $\pm\{1,2+13i,6+18i,11+i,  i,-13+2i,-18+6i,-1+11i\}$ & 1681 & 833  \\
  \hline
\end{tabular}
\caption{$\mathcal G_p$ graphs that generate 3-quasi-perfect Lee codes over $\mathbb{Z}_p^n$.}\label{tabla:codigos}
\end{center}
\end{table*}

\begin{example} Let $n=8$ be the dimension and $p=13$. The set of generators of the Cayley graph will be
$H=\bigl\{\beta u\mid \beta \in\{1,4+10i,8,7+11i\},\ u\in \mathcal U(\mathbb Z[i])\bigr\}$.
In this case the Cayley graph
$\mathcal{G} = Cay(\mathbb{Z}[i]/p\mathbb{Z}[i], H)$ induces a 2-quasi-perfect code. Note that $\mathcal{G}$ has $p^2 = 169$ vertices, which is just 17\% over $|B_2^8|=145$, the cardinal of the sphere in this dimension.
\end{example}

\begin{example} Let $n=16$ be the dimension. In this case, by extending the search into a different ring, a new graph has been found. The graph is built over the Quaternion integers $\mathbb H(\mathbb Z)$ modulo $p=5$, being the generator set
$H=\bigl\{\beta u\mid \beta\in\{1,\allowbreak 1+2i+3j,\allowbreak 3i+4j+4k,\allowbreak 3+4i+3j\},\ u\in \mathcal U(\mathbb H(\mathbb Z))\bigr\}$.
In this case, the number of vertices of the graph is $p^4=625$, which is 15\% over $|B_2^{16}|=545$.
\end{example}

These small examples suggest that there exist codes very close to be perfect, although general constructions seem to be difficult to find.

Golomb and Welch in \cite{Golomb} noted that there cannot be perfect codes with correction greater than a constant that depends on the dimension by the use of the maximum density of packing with cross-polytopes.
Clearly, this can be applied to quasi-perfect codes. For every $n$ there exists $t_n$ such that there are no $t$-quasi-perfect codes for $t\geq t_n$. Hence, this might suggest that the radius 2 case is an exceptional one.
Nevertheless, some 3-quasi-perfect codes have been found for small dimensions. Note that in this case the $n$-dimensional sphere of radius 3 has cardinal $|B_3^n|=\frac{1}{3}(1+2n)(3+2n+2n^2)$. The examples that we have found are summarized in Table~\ref{tabla:codigos}. The codes are obtained from Cayley graphs $Cay(\mathbb{Z}[i]/p\mathbb{Z}[i], H)$, for parameters $n, p, H$ as shown in the table. As it can be seen, the first example is just 31\% over the cardinal of the sphere, while the second and third are 79\% and 102\%, respectively. Any of the three examples can be considered as 3-quasi-perfect codes really near to the perfect code.

In the authors' opinion, the construction of an infinite family of graphs containing these codes or similar ones would have a great practical value. Moreover, it would contribute to a better understanding of the Golomb and Welch conjecture.

\subsection{Related Problems}

This study could be used to deal with problems from areas of study different from Coding Theory. For example, this graph theoretical study of perfect codes can be seen as the reverse of the degree-diameter problem for Cayley graphs over Abelian finite groups~\cite{Miller}. In this problem, for a given diameter, graphs with the maximum possible number of vertices are searched. Specifically, for a positive integer $t$, graphs providing $t$-covering codes but without considering the correction are looked for. Note that in this case, the order of the graphs obtained is lower than the cardinal of the corresponding sphere $|B_t^n|$. Therefore, in the present paper graphs providing $t$-correcting codes and enforcing additionally $(t+1)$-covering have been constructed. In our case, the order of the Cayley graphs is always greater than the cardinal of the sphere $|B_t^n|$. The degree-diameter problem
for $t=2$ and $t=3$ has been considered in \cite{Macbeth,Vetrik}.
In those papers families of graphs with smaller number of vertices than the sphere cardinal were given.
Specifically, one of the graph constructions in Macbeth \textsl{et al.}~\cite{Macbeth} is given for infinitely many degrees $2n$ of graphs of diameter 2 and $\frac{3}{2}(n^2-1)=\frac{3}{4}|B_2^n|-\frac{3}{2}n-\frac{9}{4}$ vertices.
Then, Vetr\'{i}k~\cite{Vetrik} constructs graphs with diameter 3 and $\frac{9}{128}(2n+3)^2(2n-5)$ vertices, which is asymptotically $\frac{27}{64}|B_3^n|$; it is remarkable that these graphs have error correction capacity 1 instead of the expected 2, and thus they do not induce quasi-perfect codes. Note that a Cayley graph attaining the degree-diameter bound will induce a perfect code and \textsl{vice versa}.


Furthermore, the graphs considered in this paper seemed to be good expanders. Therefore, the authors computed the spectrum of some of them and the results show that they are Ramanujan graphs.
\textsl{Ramanujan graphs} are good expander graphs that attain the spectral bound~\cite{Davidoff}.
More specifically, $\mathcal G$ is a Ramanujan graph if and only if for every eigenvalue $\lambda$ of its adjacency matrix it holds either $|\lambda|=\deg(\mathcal G)$ or $|\lambda|\leq 2\sqrt{\deg(\mathcal G)-1}$.
Therefore, the following conjecture is proposed.

\begin{conjecture} $\mathcal{G}_p$ is a Ramanujan graph for any prime $p \equiv 3\pmod 4$.
\end{conjecture}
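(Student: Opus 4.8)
The plan is to compute the full spectrum of $\mathcal{G}_p$ explicitly and show that every nontrivial eigenvalue is bounded in absolute value by $2\sqrt{p}$. Since $p\equiv 3\pmod 4$, the ring $\mathbb{Z}[i]/p\mathbb{Z}[i]$ is the field $\mathbb F_{p^2}$ (as noted at the start of Section~\ref{sec:diametro}), and as an \emph{additive} group it is isomorphic to $(\mathbb Z/p\mathbb Z)^2$ via $\beta=u+vi\mapsto(u,v)$. Because $\mathcal{G}_p$ is a Cayley graph over this abelian group with connection set $H=\{\beta\mid\mathcal N(\beta)=1\}$, its eigenvalues are indexed by the additive characters: writing $e_p(x)=e^{2\pi i x/p}$ and $\chi_{s,t}(u+vi)=e_p(su+tv)$, the eigenvalue attached to $(s,t)$ is the exponential sum over the norm-one circle,
$$\lambda_{s,t}=\sum_{u^2+v^2=1}e_p(su+tv).$$
The trivial character gives $\lambda_{0,0}=|H|=2n=p+1$, which is precisely $\deg(\mathcal G_p)$ by Theorem~\ref{thm:cardinal}, so it remains to bound $|\lambda_{s,t}|$ for $(s,t)\neq(0,0)$ by $2\sqrt p=2\sqrt{\deg(\mathcal G_p)-1}$.

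First I would detect the conic $u^2+v^2=1$ through the identity $\mathbf 1[u^2+v^2=1]=\frac1p\sum_{c\in\mathbb Z/p\mathbb Z}e_p(c(u^2+v^2-1))$, which separates the summation over $u$ from that over $v$. For each $c\neq 0$ the inner sums are classical quadratic Gauss sums: completing the square turns $\sum_w e_p(cw^2+sw)$ into $e_p(-s^2/(4c))\left(\tfrac{c}{p}\right)\tau$, where $\tau=\sum_w e_p(w^2)$ satisfies $\tau^2=\left(\tfrac{-1}{p}\right)p=-p$ exactly because $p\equiv 3\pmod 4$. Multiplying the $u$- and $v$-contributions, the two Legendre symbols square to $1$, the Gauss sums combine to $\tau^2=-p$, and the $c=0$ term vanishes since $(s,t)\neq(0,0)$. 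After the substitution $c\mapsto -c$ the whole expression collapses to a single Kloosterman sum,
$$\lambda_{s,t}=-\sum_{c\neq 0}e_p\left(c+\frac{s^2+t^2}{4c}\right)=-K\left(1,\tfrac{s^2+t^2}{4};p\right).$$

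The argument then closes by invoking the Weil bound $\bigl|K(1,m/4;p)\bigr|\le 2\sqrt p$, applicable here because the first Kloosterman argument is $1\neq 0$; moreover $m=s^2+t^2\neq 0$ for $(s,t)\neq(0,0)$ exactly because $-1$ is a nonsquare modulo $p$ (first item of Theorem~\ref{theo:qrl}), so no degenerate case arises. This yields $|\lambda_{s,t}|\le 2\sqrt p$ for every nontrivial character, which is the Ramanujan inequality since $\deg(\mathcal G_p)-1=p$. The genuine input — and the step I expect to be the crux — is Weil's estimate for Kloosterman sums; everything preceding it is a careful but routine Gauss-sum manipulation. The hypothesis $p\equiv 3\pmod 4$ enters twice in an essential way: it makes the ring a field of the right size, pinning the degree at $p+1$ and hence the target bound at $2\sqrt{\deg-1}=2\sqrt p$, and it forces both $\tau^2=-p$ and $s^2+t^2\neq 0$. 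This is precisely why the conjecture is stated only for this congruence class, and it suggests that the $p\equiv 1\pmod 4$ case would demand separate treatment of the isotropic directions, where the associated Kloosterman sum degenerates.
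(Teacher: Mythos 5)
Your proposal addresses a statement the paper never proves: it is posed as a conjecture and supported only by computer verification for $p<1000$, so there is no proof of the authors' to compare against. Your argument, however, is correct and complete, and it settles the conjecture affirmatively. The key steps all check out: for $p\equiv 3\pmod 4$ the additive group of $\mathbb{Z}[i]/p\mathbb{Z}[i]$ is $(\mathbb{Z}/p\mathbb{Z})^2$, so the eigenvalues of $\mathcal G_p$ are exactly the character sums $\lambda_{s,t}=\sum_{u^2+v^2=1}e^{2\pi i(su+tv)/p}$; the degree is $2n=p+1$ by Theorem~\ref{thm:cardinal}; the delta-detection and completion of the square are valid since $p$ is odd (so $4c$ is invertible), with $\tau^2=-p$ holding precisely because $p\equiv 3\pmod 4$; the two Legendre symbols indeed square away, collapsing $\lambda_{s,t}$ to $-K\bigl(1,(s^2+t^2)/4;p\bigr)$; the nondegeneracy $s^2+t^2\neq 0$ for $(s,t)\neq(0,0)$ follows from the first item of Theorem~\ref{theo:qrl}; and Weil's bound then gives $|\lambda_{s,t}|\leq 2\sqrt p=2\sqrt{\deg(\mathcal G_p)-1}$. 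As a byproduct, your computation shows that the eigenvalue $p+1$ has multiplicity one and that $-(p+1)$ does not occur, so $\mathcal G_p$ is connected and non-bipartite, which is the stronger standard form of the Ramanujan property. You should know that what you have reconstructed is in substance a known theorem of W.-C.~W.~Li (\emph{Character sums and abelian Ramanujan graphs}, J. Number Theory 41 (1992)): for $p\equiv 3\pmod 4$ the graphs $\mathcal G_p$ are the norm-form Cayley graphs over $\mathbb{F}_{p^2}$, whose nontrivial eigenvalues are Kloosterman sums. Your closing remark is also consonant with the paper's data: for $p\equiv 1\pmod 4$ the norm form is isotropic, the degree drops to $p-1$, and the target bound $2\sqrt{p-2}$ is strictly smaller than the Weil bound $2\sqrt p$, which is exactly why sporadic failures (17, 53, 541, all $\equiv 1\pmod 4$) can and do occur.
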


This conjecture has been verified for all primes $p< 1000$; the only primes in that range for which $\mathcal G_p$ is not Ramanujan are 17, 53 and 541.
Moreover, the authors believe that \textsl{most primes} fulfilling $p\equiv 1\pmod 4$ are such that $\mathcal{G}_p$ is also a Ramanujan graphs.
Therefore, the proof of this conjecture and the study of the relation between Golomb and Welch conjecture and spectral analysis will be considered as future work.

\section*{Acknowledgments}
The authors wish to thank the Associate Editor Professor Chaoping Xing and the anonymous referees, whose comments have improved the quality of the paper.

This work has been supported by the Spanish Science and Technology Commission (CICYT) under contracts TIN2013-46957-C2-2-P and AP2010-4900.


\bibliographystyle{plain}
\bibliography{main}

%

\end{document}